\newcommand{\R}{ {\mathbb R} }
\definecolor{darkgrn}{rgb}{0, 0.8, 0}
\newcommand{\rank}{\operatorname{rank}}
\newcommand{\Pred}{\operatorname{Pred}}
\theoremstyle{plain}
\newtheorem{theorem}{Theorem}[section]
\newtheorem{lemma}[theorem]{Lemma}
\theoremstyle{definition}
\newtheorem{definition}[theorem]{Definition}
\theoremstyle{remark}
\newtheorem{remark}[theorem]{Remark}
\theoremstyle{claim}
\newcommand{\maxip}{\textsc{Max-IP}}
\newcommand{\maxipsp}{\textsc{Max-IP} }
\newcommand{\maxipd}{\textsc{Max-IPD}}
\newcommand{\maxipdsp}{\textsc{Max-IPD} }
\newcommand{\ip}{\textsc{IP}}
\newcommand{\ipsp}{\textsc{IP} }
\newcommand{\kip}{\textsc{$k$-IP}}
\newcommand{\kipsp}{\textsc{$k$-IP} }
\newcommand{\kipd}{\textsc{$k$-IPD}}
\newcommand{\kipdsp}{\textsc{$k$-IPD} }
\newcommand{\tipd}{\textsc{$3$-IPD}}
\newcommand{\tipdsp}{\textsc{$3$-IPD} }
\newcommand{\cZ}{\mathcal{Z}} 
\newcommand{\cP}{\mathcal{P}} 
\newcommand{\cC}{\mathscr{C}} 
\newcommand{\Win}{W_{\rm in}}
\newcommand{\Wout}{W_{\rm out}}
\newcommand{\Sig}{\operatorname{Sig}}
\newcommand{\IScr}{{\cal I}}
\title{Interesting Paths in the Mapper}
\author{
  Ananth Kalyanaraman\footnote{School of Electrical Engineering and Computer Science, Washington State University, Pullman, WA, 99164, USA; \href{mailto:ananth@wsu.edu}{ananth@wsu.edu}},
  Methun Kamruzzaman\footnote{School of Electrical Engineering and Computer Science, Washington State University, Pullman, WA, 99164, USA; \href{mailto:md.kamruzzaman@wsu.edu}{md.kamruzzaman@wsu.edu}},  and
Bala Krishnamoorthy\footnote{Department of Mathematics and Statistics, Washington State University, Vancouver, WA, 98686, USA; \href{mailto:kbala@wsu.edu}{kbala@wsu.edu}}
}
\date{\relax}
\begin{document}
\maketitle

\begin{abstract}

  Given a high dimensional point cloud of data with functions defined on the points, the Mapper produces a compact summary in the form of a simplicial complex connecting the points.
  This summary offers insightful data visualizations, which have been employed in applications to identify subsets of points, i.e., subpopulations, with interesting properties.
  These subpopulations typically appear as long paths, flares (i.e., branching paths), or loops in the Mapper.

  We study the problem of quantifying the interestingness of subpopulations in a given Mapper.
  First, we create a weighted directed graph $G=(V,E)$ using the $1$-skeleton of the Mapper.
  We use the average values at the vertices (i.e., clusters) of the target function (i.e., a dependent variable) to direct the edges from low to high values.
  We set the difference between the average values at the vertices (high$-$low) as the weight of the edge.
  Covariation of the remaining $h$ functions (i.e., independent variables) is captured by a $h$-bit binary signature assigned to the edge.
  An \emph{interesting path} in $G$ is a directed path whose edges all have the same signature.
  Further, we define the interestingness score of such a path as a sum of its edge weights multiplied by a nonlinear function of their corresponding ranks, i.e., the depths of the edges along the path.
  The goal is to value more the contribution from an edge deep in the path than that from a similar edge which appears at the start.

  Second, we study three optimization problems on this graph $G$ to quantify interesting subpopulations.
  In the problem \maxip, the goal is to find the most interesting path in $G$, i.e., an interesting path with the maximum interestingness score.
  We show that \maxipsp is NP-complete.
  For the case where $G$ is a directed acyclic graph (DAG), which could be a typical setting in many applications, we show that \maxipsp can be solved in polynomial time---in $O(mnd_{\rm in})$ time and $O(mn)$ space, where $m,n$, and $d_{\rm in}$ are the numbers of edges, vertices, and the maximum indegree of a vertex in $G$, respectively.

  In the more general problem \ip, the goal is to find a collection of interesting paths that are edge-disjoint, and the sum of interestingness scores of all paths is maximized.
  We also study a variant of \ipsp termed \kip, where the goal is to identify a collection of edge-disjoint interesting paths each with $k$ edges, and the total interestingness score of all paths is maximized.
  While \kipsp can be solved in polynomial time for $k \leq 2$, we show \kipsp is NP-complete for $k \geq 3$ even when $G$ is a DAG.
  We develop heuristics for \ipsp and \kipsp on DAGs, which use the algorithm for \maxipsp on DAGs as a subroutine, and run in $O(mnd_{\rm in})$ and $O(mkd_{\rm in})$ time for \ipsp and \kipsp, respectively.
  
 \end{abstract}

\section{Introduction} \label{sec:intro}

Data sets from many applications come in the form of point clouds often in high dimensions along with multiple functions defined on these points.
Topological data analysis (TDA) has emerged in the past two decades as a new field whose goal is to summarize such complex data sets, and facilitate the understanding of their underlying topological and geometric structure.
In this paper, we focus on \emph{Mapper} \cite{SiMeCa2007}, a TDA method that has gained significant traction across various application domains \cite{Al2012,Hietal2015,KaKaKrSc2017,Lietal2015,Lumetal2013,NiLeCa2011,Nietal2015,Ruetal2015,Saetal2014,ToOlThRaCuSc2016}.

Starting from a point cloud $X$ (typically sampled from a metric space), the Mapper studies the topology of the sublevel sets of a \emph{filter function} $f : X \to Z \subset \R$.
Starting with a cover $\cZ$ of $Z$, the Mapper obtains a cover of the domain $X$ by pulling back $\cZ$ through $f$.
This pullback cover is then refined into a connected cover by splitting each of its elements into various clusters using a clustering algorithm that employs another function $g$ defined on $X$.
A compact representation of the data set, also termed Mapper, is obtained by taking the nerve of this connected cover---this is a simplicial complex with one vertex per each cluster, one edge per pair of intersecting clusters, and one $p$-simplex per non-empty $(p+1)$-fold intersection in general.
The method can naturally consider multiple filter functions $f_i : X \to Z_i$, with covers $\cZ_i$ jointly pulled back to obtain the cover of $X$.
Equivalently, one could consider them together as a \emph{single} vector-valued filter function $\mathbf{f} : X \to Z \subset \R^h$ for $h \geq 2$.

The Mapper has been used in a growing number of applications from diverse domains recently, ranging from medicine \cite{Hietal2015,Lietal2015,NiLeCa2011,Nietal2015,Ruetal2015,Saetal2014,ToOlThRaCuSc2016} to agricultural biotechnology \cite{KaKaKrSc2017} to basketball player profiles \cite{Al2012} to voting patterns \cite{Lumetal2013}.
It is also the main engine in the data analytics software platform of the firm Ayasdi.
The key to all these success stories is the ability of Mapper to identify subsets of $X$, i.e., \emph{subpopulations}, that behave distinctly from the rest of the points.
In fact, this feature of Mapper distinguishes it from many traditional data analysis techniques based on, e.g., machine learning, where the goal is usually to identify patterns valid  for the entire data set.

Several researchers have recently studied mathematical and foundational aspects of Mapper (see Section~\ref{ssec:related}).
These results have enabled robust ways to build \emph{one} representative Mapper for any given data set.
One could follow the work of Carri\`ere et al.~\cite{CaMiOu2017}, or identify parameters corresponding to a stable range in the persistence diagram of the multiscale mapper \cite{DeMeWa2016}.
While a collection of mappers built at multiple scales, e.g., the multiscale mapper \cite{DeMeWa2016}, could provide a more detailed representation of the data, efficient summarization of the entire representation as well as the extraction of insights relevant to the application still remain challenging.
Hence working with a single Mapper could be considered the desirable setting from the point of view of most applications.

At the same time, many applications demand more precise quantification of the interesting features in the selected Mapper, as well as to track the corresponding subpopulations as they evolve along such features.
In plant phenomics \cite{KaKaKrSc2017}, for instance, we are interested in identifying specific varieties (i.e., genotypes) of a crop that show resilient growth rates as several environmental factors vary in specific ways during the entire growing season.
Each such subpopulation with the associated variation would suggest a testable hypothesis for the practitioner.
For this purpose, it is also desirable to rank these subpopulations in terms of their ``interestingness'' to the practitioner.

\vspace*{-0.07in}
\subsection{Our contributions} \label{ssec:ourcontr}

We propose a framework for quantifying the interestingness of subpopulations in a given Mapper.
For the input point cloud $X$, we assume the Mapper is constructed with $h$ filter functions $f_i : X \to Z_i$ that represent independent variables, and a target function (i.e., a dependent variable) $g : X \to Z$.
The Mapper could be a high-dimensional simplicial complex depending on the choice of covers $\cZ_i$ for $Z_i$.

\medskip
\noindent {\bf Formulation:} We create a weighted directed graph $G=(V,E)$ using the $1$-skeleton of the Mapper.
We use the average values of $g$ at the vertices (i.e., clusters) to direct the edges from low to high values.
We set the difference between the average values at the vertices (high$-$low) as the weight of the edge.
Covariation of the $h$ functions $f_i$ is captured by a $h$-bit binary signature assigned to the edge.
We define an \emph{interesting path} in $G$ as a directed path whose edges all have the same signature (all references to a ``path'' in this paper imply a simple path, i.e., no vertices are  repeated).
Further, we define the interestingness score of such a path as a sum of its edge weights multiplied by a nonlinear function of their corresponding ranks, i.e., the depths of the edges along the path.
The goal is to value more the contribution from an edge deep in the path than that from a similar edge which appears at the start.

\noindent {\bf Theoretical results:} We study three optimization problems on this graph $G$ to quantify interesting subpopulations.
In the problem \maxip, the goal is to find the most interesting path in $G$, i.e., an interesting path with the maximum interestingness score.
We show that \maxipsp is NP-complete.
For the special case where $G$ is a directed acyclic graph (DAG), which could be a typical setting in applications, we show that \maxipsp can be solved in polynomial time---in $O(m\delta_{\rm max}d_{\rm in})$ time and $O(mn)$ space where $m$ and $n$ are the number of edges and vertices respectively, and $\delta_{\rm max}$ and $d_{\rm in}$ are the diameter of $G$ and the maximum indegree of any vertex in $G$ respectively.
Note that $d_{\rm in}<n$ and $\delta_{\rm max}<n$ (as $G$ is a DAG).

In the more general problem \ip, the goal is to find a collection of interesting paths such that these paths form an exact cover of $E$ and the overall sum of interestingness scores of all paths is maximum.
The collection of paths identified by \ipsp could include some short ones in terms of number of edges.
Hence we study also a variant of \ipsp termed \kip, where the goal is to identify a collection of interesting paths each with $k$ edges for a given number $k$, an edge in $E$ is part of at most one such path, and the total interestingness score of all paths is maximum.
While \kipsp can be solved in polynomial time for $k \leq 2$, we show \kipsp is NP-complete for $k \geq 3$.
Further, we show that \kipsp remains NP-complete for $k \geq 3$ even for the case when $G$ is a DAG.
Finally, we develop heuristics for \ipsp and \kipsp on DAGs, which use the algorithm for \maxipsp on DAGs as a subroutine, and run in $O(mnd_{\rm in})$ and $O(mkd_{\rm in})$ time for \ipsp and \kip, respectively.

\medskip
\noindent {\bf Software and application:}
In this paper, we focus primarily on the theoretical aspects of interesting paths.
However, we are also actively developing and maintaining an open source software repository that integrates all our ongoing implementations, and their experimental evaluations and applications. 
Section~\ref{sec:Implementation} briefly reports on this ongoing effort, including our use of plant phenomics \cite{houle2010phenomics} as a novel application domain for test, validation, and discovery.

\subsection{Related work} \label{ssec:related}

\vspace*{-0.04in}
In most previous applications of Mapper \cite{Al2012,Hietal2015,Lietal2015,Lumetal2013,NiLeCa2011,Nietal2015,Ruetal2015,Saetal2014,ToOlThRaCuSc2016}, interesting subpopulations are characterized by features (paths, flares, loops) identified in a visual manner.
As far as we are aware, our work proposes the first approach to rigorously quantify the interesting features, and to rank them in terms of their interestingness.
The works of Carri\`ere et al.~\cite{CaMiOu2017,CaOu2017} present a rigorous theoretical framework for 1-dimensional Mapper, where the features are identified as points in an extended persistence diagram.
But this line of work does not address the relative importance of the features in the context of the application generating the data.
Our work can be considered as a post processing of the Mapper identified by the methods of Carri\`ere et al.
While our framework can naturally consider multiple filter functions for a given Mapper, we do not address the stability of the interesting paths identified.

The interesting paths problems we study are related to the class of nonlinear shortest path and minimum cost flow problems previously investigated.
Non-additive shortest paths have been studied \cite{TsZa2004}, and the more general minimum concave cost network flow problem has been shown to be NP-complete \cite{GuPa1991,TuGhMiVa1995}.
At the same time, versions of shortest path or minimum cost flow problems where the contribution of an edge depends nonlinearly on its position or depth in the path appear to have not received much attention.
Hence the specific problems we study should be of independent interest as a new class of nonlinear longest path (equivalently, shortest path) problems.

\section{Methods} \label{sec:methods}

We refer the reader to the original paper by Singh et al.~\cite{SiMeCa2007} for background on Mapper, and recent other work \cite{CaMiOu2017,CaOu2017,DeMeWa2016} for related constructions.
For our purposes, we start with a single Mapper $M$ that is a possibly high dimensional simplicial complex constructed from a point cloud $X$ using filter functions $f_i : X \to Z_i \subset \R$ for $i=1,\dots,h$ and another function $g : X \to \R$. 
In the setting of a typical application, $g$ could represent a dependent variable whose relationship with the independent variables represented by $f_i$ is of interest.
We assume $g$ is used for clustering within the mapper framework.

\subsection{Interesting paths and interestingness scores} 
\label{sec:IS}
Each vertex in $M$ represents a cluster of points from $X$ that have similar values of function $g$, the dependent variable.
An edge in $M$ connects two such clusters containing a non-empty intersection of points.
By definition, each edge in $M$ connects clusters belonging to distinct elements of the pullback cover of $X$, and hence the corresponding values of the filter functions $f_i$ also change when moving along the edge.
Therefore, by following a trail of vertices (i.e., clusters) whose average $g$ values are monotonically varying, we can capture subpopulations that gradually or abruptly alter their behavior as measured by $g$ under continuously changing filter intervals.
For instance, a plant scientist interested in crop resilience may seek to identify a subset of crop individuals/varieties that exhibit sustained or accelerated growth rates ($g$) despite potentially adverse fluctuations in temperature ($f_1$) and humidity ($f_2$). 
We formulate the problem of identifying such subpopulations as that of finding interesting edge-disjoint paths in a directed graph.

\subsubsection{Graph Formulation} \label{sec:GraphForm}

We construct a weighted directed graph $G=(V,E)$ representation of the $1$-skeleton of $M$ along with some additional information.
Let $n=|V|$ and $m=|E|$ denote the numbers of vertices and edges in $G$, respectively.
We set $V$ as the set of vertices ($0$-simplices) of $M$, and $E$ as the set of edges ($1$-simplices) of $M$.
We assign directions and weights to the edges as follows.
Each vertex $u \in V$ denotes a subset of points from $X$ that constitute a partial cluster.
We denote this subset as $X(u)$. 
We let $g(u)$ and $f_i(u)$ denote the average values of the clustering function $g$ (dependent variable) and the filter function $f_i$, respectively, for all points in $u$:
\[ g(u)=\frac{\Sigma_{x \in X(u)} \, {g(x)}}{|X(u)|} \, ~\mbox{ and }~
   f_i(u) = \frac{\Sigma_{x \in X(u)} \, {f_i(x)}}{|X(u)|}\,, ~i=1,\dots,h.\]
We let $\omega(u)$ represent the weight of a vertex $u$. 
In this paper, we set $\omega(u)$ to be equal to $g(u)$.
For an edge $e=(u,v)$ in $E$, we assign as its weight as
$\omega(e) = |\omega(u)-\omega(v)| = |g(u)-g(v)|$.
Notice $\omega(e) \geq 0$ for all edges $e$ in $G$. 

Each edge is also associated with a direction, determined by one of the two rules illustrated in Figure~\ref{fig:path_dir_rule}. 
The simpler rule, termed {\bfseries Rule a}, directs the edge from the vertex with lower weight to the vertex with higher weight
(``a'' for \emph{ascending} weight).
If the vertex weights are equal, one of the two directions is chosen arbitrarily.
The other rule, termed {\bfseries Rule b}, handles differently the case where the weights of the two vertices are close, as defined by a cutoff $\tau \geq 0$.
In this case, the edge is allowed to be bidirectional (``b'' for \emph{bidirectional}), i.e., both the forward and backward edges are added with the same weight.
If the weights of the vertices are not close, {\bfseries Rule a} is followed.
The latter relaxed scheme is motivated toward capturing more robust interesting paths in practice. 

Note that if only {\bfseries Rule a} is followed, the resulting graph is guaranteed to be a directed acyclic graph (DAG). 
On the other hand, if the relaxed scheme 
{\bfseries Rule b} is followed, then the resulting graph is directed, 
i.e., it could potentially be cyclic.
Consequently, the optimization problems we address in this paper (see Section~\ref{sec:OptProblems}) will be presented for both DAG and directed graph inputs.

\medskip
\begin{figure*}[ht!]
    \centering
    \begin{subfigure}{0.475\textwidth}
        \centering
        \vspace*{0.15in}
        \includegraphics[keepaspectratio=yes, width=.9\linewidth]{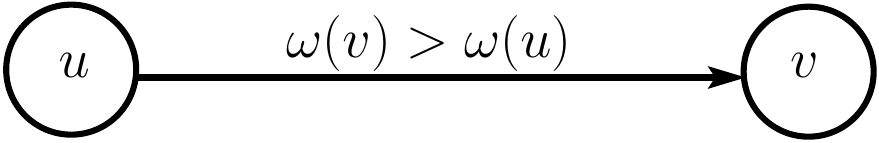}
        \vspace*{0.17in}
        \caption*{{\bfseries Rule a:} Direct edge from $u$ to $v$ if $\omega(v)>\omega(u)$,  and from $v$ to $u$ if $\omega(v)<\omega(u)$. If the weights are equal, then one of the two directions is chosen arbitrarily.}
        \label{fig:dir_path}
    \end{subfigure}%
    \hfill 
    \begin{subfigure}{0.475\textwidth}
        \centering
        \includegraphics[keepaspectratio=yes, width=.9\linewidth]{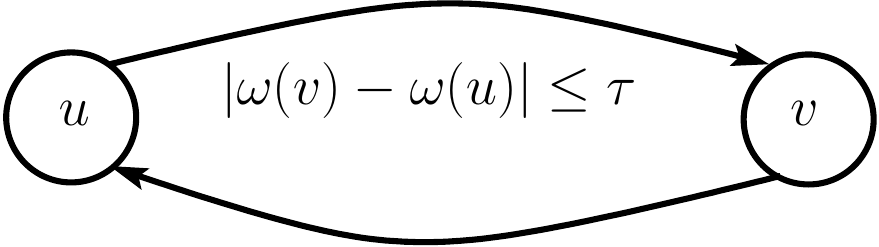}
        \caption*{{\bfseries Rule b:} Two opposite directed edges between nodes $u$ and $v$ if $|\omega(v)-\omega(u)|\leq\tau$, and directed as per Rule a otherwise.}
         \label{fig:bidir_path}
    \end{subfigure}
     \caption{
	Two rules for directing an edge, based on vertex weights and a user-defined cutoff ($\tau$).  
}   
    \label{fig:path_dir_rule}
\end{figure*}

We assign a $h$-bit binary signature $\Sig(e)=b_1b_2 \dots b_h$ to the oriented edge $e=(u,v)$ (i.e., $e: u \to v$) to capture the covariation of $g$ and the filter functions $f_i$.
We set $b_i=1$ if $f_i(u) \leq f_i(v)$, and $b_i=0$ otherwise.
If the edge is bidirected, then the signature is used as a ``wildcard''---the signature is not predetermined, and is chosen to match any candidate signature as determined by the interesting path detection algorithm.

\begin{definition} \label{def:IPath}
An \emph{interesting $k$-path} for a given $k$ with $1 \leq k \leq n-1$ is a directed path $P=[e_{i_1}, \dots, e_{i_k}]$ of $k$ edges in $G$, such that $\Sig(e_r)$ is identical for all $r=i_1,\dots,i_k$.
  An \emph{interesting path} is a path of arbitrary length in the interval $[1,n-1]$.
\end{definition}  

Note that in the above definition, a directed path is one where all edges in the path have the same direction.
Hence we have the flexibility to use a bidirected edge as part of a directed path either in the forward or reverse direction, but not both.

\begin{definition} \label{def:IScr}
  Given an interesting $k$-path $P=[e_{i_1}, \dots, e_{i_k}]$ in $G$ as specified in Definition \ref{def:IPath}, we define its \emph{interestingness score} as follows.
  \begin{equation} \label{eq:IScr}
    \IScr(P) = \sum_{r=1}^k \omega(e_{i_r}) \times \log(1+r). \, 
  \end{equation}
  In particular, the contribution of an edge $e \in P$ to $\IScr(P)$ is set to $\omega(e) \times \log(1+\rank(e,P))$, where $\rank(e,P)$ is the rank or order of edge $e$ as it appears in $P$.
\end{definition}

Intuitively, we use the rank of an edge as an inflation factor for its weight---the later an edge appears in the path, the more its weight will count toward the interestingness of the path.
This logic incentivizes the growth of long paths.
The log function, on the other hand, helps temper this growth in terms of number of edges.
  Inflation of weights for edges that appear later in the path 
  is motivated by the potential interpretation of interesting paths in the context of real world applications.
  For instance, while analyzing plant phenomics data sets \cite{KaKaKrSc2017,martin2013plant}, we expect plants to show accelerated growth spurts later in the growth season. 
  Plants showing such spurts later in the season are potentially more interesting to the practitioner than ones showing a steady growth rate throughout the season.

\begin{remark}
 The above framework can be modified easily to characterize robust interesting paths, where the signature matching condition is relaxed---for instance, $b_i=1$ if $0.9 f_i(v) \leq f_i(u) \leq f_i(v)$.
\end{remark}

\begin{remark}
  While we assume $g$ and $f_i$ are functions from $X$ to $\R$, our framework could handle more general functions as well. 
  If some $f_i$ is a vector-valued function, for instance, we could first compute pairwise distances of the points in $X$ using $f_i$, and then assign to each point in $X$ its average distance to all other points as a ``surrogate'' function.
\end{remark}

\begin{remark} \label{rem:gclust}
  The framework applies without change to cases where $g$ is used along with other functions to cluster.
  In fact, $g$ could be used also as a filter function 
  as long as it is used for clustering.
\end{remark}

\subsubsection{Optimization Problems}
\label{sec:OptProblems}

We now present multiple optimization problems with the broader goal of identifying interesting path(s) that maximize interestingness score(s).

\smallskip
\hspace*{-0.15in}%
\begin{tabular}{rll}
  {\bfseries \maxip}: &\parbox[t]{5.5in}{Find an interesting path $P$ in $G$ such that $\IScr(P)$ is maximized.} \\
\vspace*{-0.07in}  \\
{\bfseries \kip}: & \parbox[t]{5.5in}{For a given $k$ between $1$ and $n-1$, find a collection $\cP$ of interesting $k$-paths such that each $e \in E$ is part of at most one $P \in \cP$, and the total interestingness score $\IScr(\cP) = \sum_{P \in \cP} \IScr(P)$ is maximized.}
\vspace*{0.05in}  \\
{\bfseries \ip}: &\parbox[t]{5.5in}{Find a collection $\cP$ of interesting paths in $G$ such that the total interestingness score $\IScr(\cP) = \sum_{P \in \cP} \IScr(P)$ is maximized ($\cP$ will exactly cover $E$, i.e., each $e \in E$ is part of exactly one $P \in \cP$).} \\
\end{tabular}

\medskip
  Both \ipsp and \kipsp produce edge-disjoint collections of interesting paths.
  In \ip, every edge in $G$ is part of an interesting path in $\cP$.
  But this setting might include several short (in number of edges) interesting paths.
  In \kip, each interesting path found has exactly $k$ edges, and some edges in $E$ might not be part of any interesting $k$-path in $\cP$.
  Hence the paths identified by \kipsp are likely to be more meaningful in practice. 

\begin{remark}
  The $\log(1+\rank)$ factor in the interestingness score in Equation (\ref{eq:IScr}) makes each of the above optimization problems nonlinear.
  At the same time, the type of nonlinearity introduced here is distinct from the ones studied in the literature, e.g., in non-additive shortest paths \cite{TsZa2004}, or in minimum concave cost flow \cite{GuPa1991,TuGhMiVa1995}.
  Hence these problems form a new class of nonlinear longest (equivalently, shortest) path problems, which would be of interest independent of their application in the context of the Mapper and TDA.
\end{remark}

\begin{remark}
  Our directed graph formulation (in Section~\ref{sec:GraphForm}) with signatures determined by $h$ filter functions could also be considered equivalently as the computation of the coboundary of the filter functions seen as a $0$-cochain with coefficients in $\R^h$ \cite{Munkres1984}.
  Further, under appropriate assumptions on the filter functions being smooth, candidates for interesting paths could be seen as flows in a gradient field \cite[\S 14]{Tu2011}.
  Under these assumptions, one could argue that the graph constructed will necessarily be a DAG, and results from Morse theory \cite{Ma2002,Mi1963} would also apply.
  But we are not assuming the functions involved are necessarily smooth.
  More importantly, the relaxed Rule b (illustrated in Figure~\ref{fig:path_dir_rule}) creates bidirectional edges, which are more appropriate in the context of real world applications.
  Further, the $\log(1 + \rank)$ factor used in defining our interestingness score (in Equation~\ref{eq:IScr}) is not captured by default approaches for maximal flows in gradient fields or by Morse theory.
\end{remark}

\section{The \maxipsp Problem} \label{sec:MaxIP}

The goal of \maxipsp is to identify an interesting path with the maximum interestingness score. 
We show \maxipsp is NP-complete on directed graphs, but is in P on directed acyclic graphs (DAGs).

\subsection{\maxipsp on directed graphs} \label{sec:DirectedMaxIP}

In the decision version of \maxipsp termed \maxipd, we are given a directed graph $G=(V,E)$ with edge weights $\omega(e) \geq 0$ and signatures $\Sig(e)$ for $e \in E$ and a target score $s_0 \geq 0$.
The goal is to determine if there exists an interesting path $P$ in $G$ whose interestingness score $\IScr(P) \geq s_0$.

\begin{lemma}\label{lem:MaxIPDNPC}
  \maxipdsp on directed graph $G=(V,E)$ is NP-complete. 
\end{lemma}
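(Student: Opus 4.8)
The plan is to establish membership in NP and then NP-hardness via a reduction from the directed Hamiltonian path problem.

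First, for membership, a candidate path $P$ serves as a certificate: in polynomial time one can verify that $P$ is a simple directed path (no repeated vertices), that $\Sig(e)$ is identical across all edges $e$ of $P$, and that $\IScr(P) \geq s_0$ by directly evaluating the sum in Equation~(\ref{eq:IScr}). Hence $\maxipd \in$ NP. The substance of the lemma lies in hardness, and the key structural feature to exploit is that the rank-inflation factor $\log(1+\rank)$ strictly rewards longer paths, so that \maxipdsp behaves like a longest-path problem.

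For NP-hardness I would reduce from directed Hamiltonian path, which is NP-complete. Given a directed graph $H=(V_H,E_H)$ with $|V_H|=n$, construct a \maxipdsp instance by taking $G=H$, assigning unit weight $\omega(e)=1$ to every edge, and assigning a single common signature to every edge (e.g.\ set $h=1$ and $b_1=1$ throughout). Since all signatures coincide, the signature constraint in Definition~\ref{def:IPath} is vacuous, and every simple directed path in $G$ is an interesting path. Finally set the target score to $s_0 = \sum_{r=1}^{n-1}\log(1+r) = \log(n!)$. This construction is clearly computable in polynomial time.

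The correctness argument rests on a monotonicity observation. With unit edge weights, an interesting path $P$ with $k$ edges has score $\IScr(P) = \sum_{r=1}^{k}\log(1+r)$, and because each term $\log(1+r)>0$ this quantity is strictly increasing in $k$. Consequently $\IScr(P)\geq s_0 = \log(n!)$ holds if and only if $P$ has exactly $k=n-1$ edges, i.e.\ visits all $n$ vertices, which is precisely a directed Hamiltonian path. Thus $G$ admits an interesting path of score at least $s_0$ if and only if $H$ has a Hamiltonian path, completing the reduction. The step requiring the most care is this monotonicity/threshold calibration: one must argue that no short path with few edges can reach $s_0$, which holds here exactly because weights are uniform and the depth-rewarding factor forces high-scoring paths to be long. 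I would remark that the reduction uses neither the signatures nor any nontrivial weights, so hardness arises purely from the nonlinear, rank-inflated objective on a directed graph that may contain cycles; the same template also reduces from the longest-path problem by taking $s_0=\sum_{r=1}^{k}\log(1+r)$ for the prescribed target length $k$.
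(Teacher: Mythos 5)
Your proof is correct, and it differs from the paper's in a small but genuine way: the choice of source problem. Both arguments share the same core mechanism---assign unit weight and a common signature to every edge, so that interesting paths coincide with simple directed paths and $\IScr(P) = \sum_{r=1}^{k}\log(1+r) = \log((k+1)!)$ is strictly increasing in the number of edges $k$; a factorial threshold then turns \maxipdsp into a longest-path detector. The paper, however, reduces from directed Hamiltonian \emph{cycle} (which is on Karp's original list), and this forces a gadget: an arbitrary vertex $v$ is split into $v'$ (inheriting the out-edges of $v$) and $v''$ (inheriting the in-edges), so that Hamiltonian cycles of $G$ correspond to Hamiltonian $v'$--$v''$ paths of the $(n+1)$-vertex graph $G'$, with threshold $s_0=\log((n+1)!)$. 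You reduce from directed Hamiltonian \emph{path}, take the input graph verbatim, and set $s_0=\log(n!)$, so no gadget is needed. Your route is cleaner and shorter, at the cost of invoking the NP-completeness of directed Hamiltonian path, which is standard and true but not on Karp's list; indeed the usual proof of that fact is precisely the vertex-splitting reduction the paper carries out inline, so the two proofs differ mainly in where that gadget lives. One step worth spelling out in your threshold calibration: $\log((k+1)!)\geq\log(n!)$ forces $k\geq n-1$, while simplicity of the path (no repeated vertices, per Definition~\ref{def:IPath}) forces $k\leq n-1$, and these two inequalities together give $k=n-1$ exactly.
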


\begin{proof} 

  Given a path $P$ in $G$, we can verify that it is a directed simple path with all the edges of the same signature, compute its interestingness score $\IScr(P)$ using Equation (\ref{eq:IScr}), and compare it with $s_0$---all in polynomial time.
  Hence \maxipdsp is in NP.

We reduce the problem of checking if a directed graph has a directed Hamiltonian cycle (\textsc{DirHC}) to \maxipd.
\textsc{DirHC} is one of the 21 NP-complete problems originally introduced by Karp \cite{Ka1972}.
Given an instance $G=(V,E)$ of \textsc{DirHC} with $|V|=n$, we construct an instance of \maxipdsp on a directed graph $G'=(V',E')$ as follows.
We replace an arbitrary vertex $v \in V$ by two vertices $v'$ and $v''$, i.e., $V' = (V\setminus\{v\}) \cup \{v',v''\}$ and $|V'| = n+1$.
Each $(v,w) \in E$ is replaced by $(v',w)$ in $E'$ and each $(u,v) \in E$ is replaced by $(u,v'')$ in $E'$.
All other edges in $E$ are included in $E'$ without changes.
All edges in $E'$ are assigned unit weights and identical signatures, and we set $s_0 = \log((n+1)!)$.

We claim that $G$ has a directed Hamiltonian cycle $C$ if and only if there exists an interesting path $P'$ in $G'$ with interestingness score $\IScr(P') = s_0$.
Let $G$ have a directed Hamiltonian cycle $C$.
Then $C$ must have $n$ edges by definition, and $C$ visits (i.e., enters and leaves) each vertex in $V$ exactly once.
Hence there must exist edges $(v,w)$ and $(u,v)$ in $C$.
We construct the interesting path $P'$ in $G'$ using $(v',w), (u,v'')$, and the remaining $(n-2)$ edges in $C$.
Thus $P'$ is a directed path in $G'$ with $n$ edges.
Further, since all edges in $E'$ have unit weights and identical signatures, it is clear
from Equation (\ref{eq:IScr}) that $P'$ is indeed an interesting path in $G'$ with $\IScr(P') = s_0$.

Conversely, let $P'$ be an interesting path in $G'$ with $\IScr(P') = s_0$
(notice that $\IScr(P') > s_0$ is not possible, as nodes are not allowed to be repeated in $P'$). 
Since $P$ is an interesting path, it visits (i.e., enters and/or leaves) any vertex in $V'$ at most once.
Since all edges in $E'$ have unit weights and identical signatures, and by the definition of
interestingness score in Equation (\ref{eq:IScr}), it is clear that $P'$ must have $n$ edges.
Hence $P'$ must start with an edge $(v',w)$ and end with an edge $(u,v'')$.
Then the directed cycle $C$ in $G$ defined by the edges $(v,w), (u,v)$, and the remaining $(n-2)$ edges in $P'$ is Hamiltonian.
Hence \maxipdsp on directed graphs is NP-complete.
\end{proof}


\subsection{\maxipsp on directed acyclic graphs}  \label{ssec:MaxIPDAG}

\begin{lemma}\label{lem:MAXIP_DAG_in_P}
 \maxipsp on a directed acyclic graph $G=(V,E)$ is in P. 
\end{lemma}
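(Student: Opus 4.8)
The plan is to solve \maxip on a DAG by dynamic programming over a topological ordering of $V$, where the state records not only the current vertex but also the common signature of the path and its length (number of edges). Tracking the length is essential: because the contribution of an edge to $\IScr$ is its weight scaled by $\log(1+\rank)$, the value gained by appending an edge to a partial path depends on how many edges already precede it. Thus, unlike an ordinary longest-path computation, I cannot summarize a partial path by a single best score at its endpoint; I must keep the best score achievable for \emph{each} possible depth, and only then will appending an edge apply the correct multiplier.

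Concretely, for a vertex $v$, a signature $\sigma$, and an integer $\ell$ with $1 \le \ell \le n-1$, let $D[v,\sigma,\ell]$ denote the maximum interestingness score of an interesting path of exactly $\ell$ edges that ends at $v$ and all of whose edges carry signature $\sigma$. The base case handles single-edge paths: for each edge $e=(u,v)$ I set $D[v,\Sig(e),1] = \omega(e)\log 2$. Processing the vertices in topological order, I then use the recurrence
\begin{equation*}
  D[w,\sigma,\ell+1] \;=\; \max_{\substack{(v,w)\in E\\ \Sig(v,w)=\sigma}} \Bigl( D[v,\sigma,\ell] + \omega(v,w)\,\log(2+\ell) \Bigr),
\end{equation*}
so that the appended edge $(v,w)$, occupying rank $\ell+1$, contributes exactly $\omega(v,w)\log(2+\ell)$. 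The answer to \maxip is $\max_{v,\sigma,\ell} D[v,\sigma,\ell]$, and the optimal path itself is recovered by storing back-pointers.

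The crux of the correctness argument---and the reason the problem becomes tractable on a DAG---is that in a DAG every directed walk is automatically a simple path, since a repeated vertex would force a directed cycle. Hence when the recurrence concatenates an optimal length-$\ell$ subpath ending at $v$ with the edge $(v,w)$, the result is guaranteed to be a legal (vertex-simple) interesting path, with no need to check for repeated vertices---which is precisely the check that makes the problem NP-complete on general digraphs (Lemma~\ref{lem:MaxIPDNPC}). I would then establish the principle of optimality by an exchange argument: if an optimal interesting path of length $k$ ending at $v_k$ had a prefix of length $k-1$ ending at $v_{k-1}$ that were not itself an optimal length-$(k-1)$ interesting path of the same signature to $v_{k-1}$, I could substitute a better such prefix; because the substitute still ends at $v_{k-1}$ in the DAG, appending the final edge keeps the path simple and leaves that edge's rank (and hence its contribution) unchanged, strictly improving the total and contradicting optimality.

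Finally, I would bound the running time to confirm membership in P. The relevant $(v,\sigma)$ pairs number $O(m)$ (each edge contributes at most one signature to its head), the length index ranges only up to the diameter $\delta_{\max}<n$, and each table entry is computed by scanning the incoming edges of its vertex, at most $d_{\rm in}$ of them; together this gives a polynomial bound consistent with the claimed $O(m\,\delta_{\max}\,d_{\rm in})$ time and $O(mn)$ space. The main obstacle I anticipate is not the DAG structure per se but the rank-dependent weights: I must make the exchange argument respect that only the appended edge's multiplier changes under substitution, and I must justify why indexing the state by length is unavoidable---namely that a path which is suboptimal at its current depth can become optimal after further extension, so collapsing the length dimension would be incorrect.
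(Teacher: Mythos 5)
Your proposal is correct and follows essentially the same approach as the paper's proof: a dynamic program whose state tracks both the endpoint of a partial interesting path and its length (so that the $\log(1+\rank)$ multiplier of an appended edge is well defined), with correctness resting on acyclicity guaranteeing that concatenation preserves vertex-simplicity. The only cosmetic difference is that you index states by (vertex, signature, length) over a topological order, whereas the paper indexes by (edge, length) via predecessor edges $\Pred(e)$---these are equivalent formulations with the same recurrence, correctness argument, and complexity bounds.
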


\begin{proof}
  We present a polynomial time algorithm for \maxipsp on a DAG (as proof of Lemma~\ref{lem:MAXIP_DAG_in_P}).
  The input is a DAG $G=(V,E)$ with $n$ vertices and $m$ edges, with edge weights $\omega(e)\geq 0$ and signatures $\Sig(e)$ for all $e\in E$.
  The output is an interesting path $P^*$ which has the maximum interestingness score in $G$. 
  We use dynamic programming, with the forward phase computing $\IScr(P^*)$ and the backtracking procedure reconstructing a corresponding $P^*$.

  Let $T(i,j)$ denote the score of a maximum interesting path of length $j$ edges ending at edge $e_i$ for $i\in [1,m]$. 
  Since an interesting path could be of length at most $(n-1)$, we have $j\in [1,n-1]$.
  Therefore the values in the recurrence can be maintained in a 2-dimensional table of size
  $m \times (n-1)$, as illustrated in Figure \ref{fig:maxipalgoTbl}. 
The algorithm has three steps:
\begin{itemize}
\item{\bf Initialization:}
$ T(i,1) = \omega(e_i) \times \log(2)~, \mbox{where } 1\leq i\leq m$.

\item{\bf Recurrence:}
For an edge $e=(u,v) \in E$, we define a \emph{predecessor edge} of $e$ as any edge
$e^\prime\in E$ of the form $e^\prime=(w,u)$ and $\Sig(e^\prime)=\Sig(e)$. 
Let $\Pred(e)$ denote the set of all predecessor edges of $e$.
Note that $\Pred(e)$ can be possibly empty.
We define the recurrence for $T(i,j)$ as follows.
\begin{equation} \label{eq:rec}
  T(i,j) = \max_{e_{i^\prime} \in \,\Pred(e_i)}\big\{T(i^\prime,j-1) + \omega(e_i) \times \log(1+j) \big\}
\end{equation}

\item{\bf Output:}
  We report the score that is maximum in the entire table. 
  A corresponding optimal path $P^*$ can be obtained by backtracking from that cell to the  first column.


\end{itemize}

\paragraph{Proof of correctness:}
Any interesting path in $G$ can be at most $n-1$ edges long.
As a particular edge could appear anywhere along such a path, its rank can range between $1$
and $n-1$.
Hence the $m \times (n-1)$ recurrence table $T$ sufficiently captures all possibilities for each edge in $E$.
The following key observation completes the proof. 
Let $P^*(i,j)$ denote an optimal scoring path, if one exists, of length $j\in [1,n-1]$ ending at edge $e_i\in E$.
If $P^*(i,j)$ exists and if $j>1$, then there should also exist $P^*(i^\prime,j-1)$ where $i^\prime\in \Pred(e_i)$.
Furthermore, the edge $e_i$ \emph{could not} have appeared in $P^*(i^\prime,j-1)$ because $G$ is acyclic. 
Therefore, due to the edge-disjoint nature of $P^*(i^\prime,j-1)$ and the remainder of $P^*(i,j)$ (which is $e_i$), the principle of optimality is preserved---i.e., the maximum operator in Equation (\ref{eq:rec}) is guaranteed to ensure optimality of $T(i,j)$.
  
\begin{figure*}
  \centering
  \includegraphics[scale=0.55]{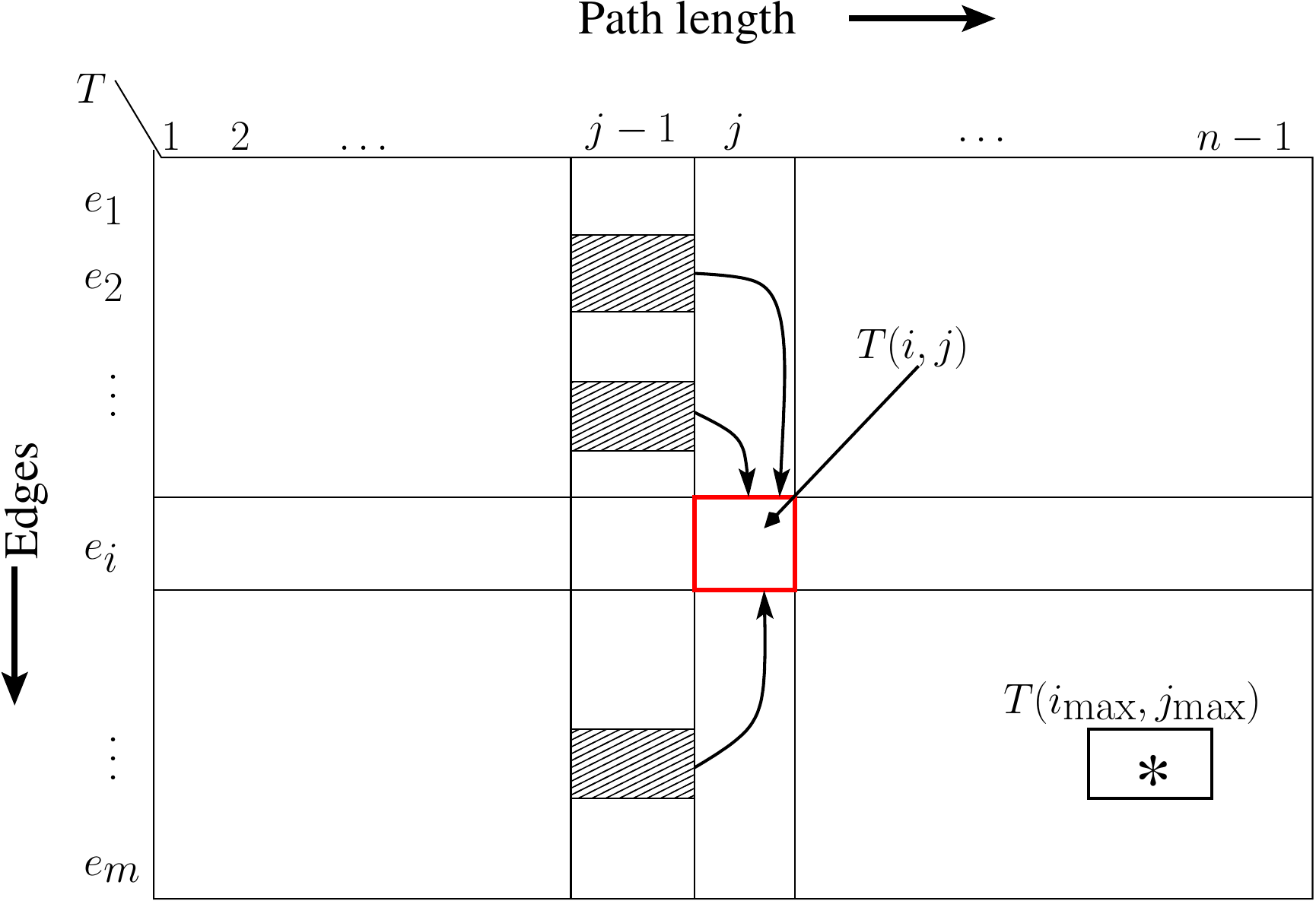}
  \caption{Table $T(i,j)$ for the \maxipsp algorithm.} 
  \label{fig:maxipalgoTbl}
\end{figure*}


\paragraph{Complexity analysis:}
The above dynamic programming algorithm can be implemented to run in $O(mn)$ space and a worst-case time complexity of $O(mnd_{\rm in})$, where $d_{\rm in}$ denotes the maximum indegree of any vertex in $V$. 
\end{proof}

\subsubsection{Algorithmic improvements} \label{sssec:algoimpr}
The above dynamic programming algorithm for \maxipsp for DAGs can be implemented to run in space and time smaller in practice than the worst case limits suggested above. 
First, we note that computing the full table $T$ is likely to be wasteful, as it is likely to be sparse in practice.
The sparsity of $T$ follows from the observation that an interesting path of length $j$ ending at edge $e_i$ can exist only if there exists at least one other interesting path of length $j-1$ ending at one of $e_i$'s predecessor edges. 
We can exploit this property by designing an iterative implementation as follows.

Instead of storing the entire table $T$, we store only the rows (edges), and introduce columns on a ``need basis'' by maintaining a dynamic list $L(e_i)$ of column indices for each edge $e_i$.
\begin{compactenum}[S1)]
  \item
    Initially, we assign $L(e_i)=\{1\}$, as each edge is guaranteed to be in an interesting path of length at least $1$ (the path consisting of the edge by itself).  
  \item \label{stepupdate}
    In general, the algorithm performs multiple iterations; 
    within each iteration, we visit and update the dynamic lists for all edges in $E$ as follows. 
    For every edge $e_{i'}\in \Pred(e_i)$, $L(e_i)=L(e_i) \cup \{ \ell + 1 \, | \, \ell\in L(e_{i'})\}$.
    The algorithm iterates until there is no further change in the lists for any of the edges. 
\end{compactenum}

\smallskip
\noindent The number of iterations in the above implementation can be bounded by the length of the longest path in the DAG (i.e., the diameter $\delta_{\max}$), which is less than $n$. 
Also, we implement the list update from predecessors to successors such that each edge is visited only a constant number of times (despite the varying products of in- and out-degrees at different vertices). 
To this end, we implement the update in S\ref{stepupdate} as a two-step process:
first, performing a union of all lists from the predecessor edges of the form $(*,v)$ so that the merged lists can be used to update the lists of all the successor edges of the form $(v,*)$. 
Thus the work in each iteration is bounded by $O(m)$. 

Taken together, even in the worst-case scenario of $(\delta_{\max}+1)$ iterations, the overall time to construct these dynamic lists is $O(m\delta_{\max})$. 
Furthermore, during the list construction process, if one were to carefully store the predecessor locations using pointers, then the computation of  the $T(i,j)$ recurrence in each cell can be executed in time proportional to the number of \emph{non-empty} predecessor values in the table. 
Overall, this revised algorithm can be implemented to run in time $O(m \delta_{\max} d_{\rm in})$, 
and in space proportional to the number of non-zero values in the matrix. 

Further, the above implementation is also inherently parallel since the list value at an edge in the current iteration depends only on the list values of its predecessors from the previous iteration.

\section{The \kipsp Problem} \label{sec:kIP}

The goal of \kipsp for $k \leq n-1$ is to find a set $\cP$ of edge-disjoint interesting $k$-paths such that the sum of their interestingness scores is maximized.
We show that \kipsp on directed graphs can be solved in polynomial time for $k \leq 2$. 
On the other hand, we show that \kipsp on a directed acyclic graph is NP-Complete for $k \geq 3$ (see Theorem \ref{thm:kge3IPDNPC}).


\medskip
\noindent The smallest value of $k$ for which \kipsp is nontrivial is $2$.
We can solve $2$-IP as a weighted matching problem.
\begin{lemma} \label{lem:12-IP}
  \kipdsp on directed graph $G=(V,E)$ is in P for $k \leq 2$.
\end{lemma}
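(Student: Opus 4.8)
The plan is to dispatch the two cases $k=1$ and $k=2$ by reduction to problems already known to be polynomial. The case $k=1$ is immediate: an interesting $1$-path is a single edge (the signature condition in Definition~\ref{def:IPath} is vacuous), distinct single edges are automatically edge-disjoint, and each contributes a nonnegative score $\omega(e)\log 2$ by Equation~(\ref{eq:IScr}). Hence the optimal collection simply takes every edge of $G$, with total score $\log(2)\sum_{e\in E}\omega(e)$, computed in $O(m)$ time. The substantive case is $k=2$, which I would reduce to \emph{maximum-weight matching} in an auxiliary undirected graph $H$.

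I would build $H$ as follows. Put one vertex of $H$ for each edge of $G$, so $V(H)=E$. For every pair of edges $a=(u,v)$ and $b=(v,w)$ of $G$ that meet at the shared vertex $v$, satisfy $u\neq w$ (so that $[a,b]$ is a simple path on three distinct vertices), and carry compatible signatures (either $\Sig(a)=\Sig(b)$, or one of $a,b$ is a bidirected edge whose wildcard signature can be set to match the other, as in Section~\ref{sec:GraphForm}), I add an edge $\{a,b\}$ to $H$ with weight equal to the interestingness score $\IScr([a,b])=\omega(a)\log 2+\omega(b)\log 3$. The number of such pairs is $\sum_{v}d_{\rm in}(v)\,d_{\rm out}(v)\le m\,d_{\rm in}$, so $H$ is built in polynomial time.

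The key step is to verify that this correspondence is a weight-preserving bijection between collections of edge-disjoint interesting $2$-paths in $G$ and matchings in $H$. First I would argue the orientation of a valid $2$-path on an unordered pair $\{a,b\}$ is forced: if $[a,b]$ is a valid simple $2$-path then $[b,a]$ cannot also be one, since that would require the head of $b$ to equal the tail of $a$, i.e.\ $w=u$, repeating a vertex; hence each pair $\{a,b\}$ contributes at most one edge to $H$ with a well-defined weight, and $H$ is simple. Then two selected $2$-paths are edge-disjoint in $G$ exactly when the corresponding edges of $H$ share no endpoint, so a collection $\cP$ of edge-disjoint interesting $2$-paths is precisely a matching in $H$, and $\IScr(\cP)$ equals its total weight. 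A maximum-weight matching in $H$ therefore yields an optimal $\cP$, and conversely; since maximum-weight matching in a general graph is solvable in polynomial time (Edmonds' blossom algorithm and its weighted refinements), \kipsp for $k=2$ is in P, proving the lemma.

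The main obstacle is ensuring this really is a faithful, weight-preserving encoding despite two complications: the score is order-dependent (the second edge of a $2$-path is inflated by $\log 3$ rather than $\log 2$), and bidirected edges carry wildcard signatures. The simple-path observation above resolves the first, fixing the edge order---and hence the $H$-edge weight---uniquely for each admissible pair; the compatibility condition in the construction of $H$ absorbs the second. (If one further insists that the two oriented copies of a bidirected edge be used at most once in total, it suffices to represent each such edge by a single vertex of $H$, which keeps the reduction within maximum-weight matching.)
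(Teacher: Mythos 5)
Your proof is correct and takes essentially the same route as the paper: $k=1$ is dispatched trivially by taking every edge as its own $1$-path, and $k=2$ is reduced to maximum-weight matching on an auxiliary undirected graph whose vertices are the edges of $G$, solved by Edmonds'/Gabow's algorithm. The only real divergence is a detail: the paper sets the weight of a pair $\{e_i,e_j\}$ to $\max\{\IScr(P_{ij}),\IScr(P_{ji})\}$ to cover the case where both orientations are feasible (which can occur when both edges are bidirected ``wildcard'' edges treated as single entities), whereas you argue the orientation is always forced---true for edges with fixed signatures, and your closing parenthetical remark would need exactly that max rule to handle the collapsed bidirected-edge representation.
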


\begin{proof}
  The case of $k=1$ turns out to be trivial.
  An optimal solution for $1$-IP is obtained by taking $\cP$ as a collection of $m$ interesting $1$-paths each comprised of a single edge.
  These $1$-paths are edge-disjoint by definition, and the need to compare signatures within a path does not arise.
  Since all edge weights $\omega \geq 0$, the total interestingness score $\IScr(\cP)$ is guaranteed to be maximum.
  This optimal solution is unique when $\omega(e) > 0$ for all edges $e \in E$.

  We model the $2$-IP problem ($k=2$) as an equivalent weighted matching problem on an undirected graph $G'=(V',E')$, which we construct as follows.
  We include a vertex $i \in V'$ for each edge $e_i \in E$ in the input graph.
  Hence $|V'| = m$.
  Whenever edges $\{e_i,e_j\} \in E$ form an interesting $2$-path $P_{ij}$ in $G$, we add the undirected edge $(i,j) \in E'$ with its weight $\omega_{ij} = \IScr(P_{ij})$ computed using Equation (\ref{eq:IScr}).
  If both interesting paths $P_{ij} = [e_i,e_j]$ and $P_{ji} = [e_j,e_i]$ are possibly formed by a pair of edges $\{e_i,e_j\} \in E$, we set $\omega_{ij} = \max\{\IScr(P_{ij}),\IScr(P_{ji})\}$.
  Notice that $\omega_{ij} \geq 0$ for all edges $(i,j) \in E'$, and $|E'| \leq m(m-1)/2$.
  A matching $M' \subseteq E'$ in $G'$ corresponds to a set $\cP$ of edge-disjoint interesting $2$-paths in $G$---a vertex $i \in V'$ will be matched with \emph{at most} one other vertex $j \in V'$, and such a match of vertices in $V'$ corresponds to the interesting path $P_{ij} \in \cP$ (or $P_{ji} \in \cP$, but not both).
  It follows that a maximum matching in $G'$ corresponds to an optimal solution to $2$-IP on the input graph $G$.

  The maximum weighted matching problem on an undirected graph with $n$ vertices and $m$ edges can be solved in strongly polynomial time---e.g., Gabow's implementation \cite{Ga1990} of Edmonds' algorithm \cite{Ed1965} runs in $O(nm + n^2 \log n)$ time.
  As such, we can solve $2$-IP by solving the weighted matching problem on the associated graph $G'$ in $O(m^3)$ time.
  Hence $k$-IP is in P for $k \leq 2$.
\end{proof}




We now consider \kipsp for $k \geq 3$ on directed acyclic graphs.
To characterize its complexity, we study the decision version of \kipsp termed \kipd, in which we are given a directed acyclic graph $G=(V,E)$ with edge weights $\omega(e) \geq 0$ and signatures $\Sig(e)$ for $e \in E$ and a target score $s_0 \geq 0$.
The goal is to determine if there exists a collection $\cP$ of edge-disjoint interesting $k$-paths in $G$ whose total interestingness score $\IScr(\cP) = s_0$.

\begin{theorem} \label{thm:kge3IPDNPC}
  \kipdsp on a directed acyclic graph $G=(V,E)$ is NP-complete for $k \geq 3$.
\end{theorem}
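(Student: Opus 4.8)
\medskip\noindent\textbf{Proof plan.}
The plan is to first establish membership in NP and then give a polynomial-time reduction from \textsc{3-Dimensional Matching} (\textsc{3DM}), one of Karp's NP-complete problems \cite{Ka1972}. Membership is immediate: a certificate is the collection $\cP$ itself, and one checks in polynomial time that every $P \in \cP$ is a directed simple path with exactly $k$ edges all sharing one signature, that the paths are pairwise edge-disjoint, and that $\sum_{P \in \cP}\IScr(P) = s_0$, exactly as in the verification argument of Lemma~\ref{lem:MaxIPDNPC}.

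The reduction is driven by one simplifying device: in the instance I build, every edge receives unit weight and a single common signature. With a common signature every directed path is automatically interesting, and by Equation~(\ref{eq:IScr}) every interesting $k$-path then has the same score $c_k := \sum_{r=1}^{k}\log(1+r)$. Hence $\IScr(\cP) = c_k\,|\cP|$, so the constraint $\IScr(\cP) = s_0$ merely prescribes the number of paths in $\cP$. Choosing $s_0 = c_k\,(m/k)$, where $m=|E|$ is arranged to be a multiple of $k$, forces $\cP$ to use every edge exactly once. Thus \kipdsp under these choices is exactly the question of whether the arcs of $G$ can be partitioned into directed $k$-edge paths---a directed, acyclic analogue of $P_{k+1}$-decomposition.

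I would first settle $k=3$. Starting from a \textsc{3DM} instance with element classes $W,X,Y$ (each of size $q$) and triple set $\mathcal{T}$, I build a \emph{leveled} digraph in which every vertex lies on an integer level and every arc raises the level by one; this guarantees acyclicity and, crucially, forces every directed path to move strictly upward through the levels. Each element is represented by a mandatory arc, and each triple $t \in \mathcal{T}$ is represented by a small gadget of additional arcs, designed so that the gadget arcs together with the three element-arcs of $t$ admit \emph{exactly} two partitions into directed $3$-paths: a \emph{selected} partition that consumes the three element-arcs of $t$, and an \emph{unselected} partition internal to the gadget that leaves those element-arcs to be covered elsewhere. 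Since each element-arc must be covered exactly once and can be covered only through a gadget of a triple containing it, a full $P_4$-decomposition exists if and only if exactly one triple covering each element is selected, i.e.\ if and only if $\mathcal{T}$ has a perfect matching. For general $k>3$ I would then pad this construction: each structural arc participating in a $3$-path is lengthened into a forced chain of unit-weight, common-signature arcs so that every intended $3$-path becomes a $k$-edge path, with the padding arcs passing through fresh vertices of in- and out-degree one (respecting the leveling) so that they admit no decomposition other than being absorbed into the path they extend. This preserves each gadget's selected/unselected dichotomy and the overall equivalence, and one sets $s_0 = c_k\,(m/k)$ for the padded graph.

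The main obstacle is the gadget design in the $k=3$ step: one must exhibit a triple gadget whose arc set has \emph{precisely} the two intended directed-$3$-path decompositions and no others, and then verify that no spurious interesting path can cross between distinct gadgets or combine gadget arcs with element-arcs in an unintended way. The leveled layout is the key tool, since strictly increasing levels sharply restrict which arcs can share a common directed path; the remaining work is to choose levels and incidences so that the only upward $3$-paths available are exactly the selected and unselected ones. Given such a gadget, acyclicity is immediate from the leveling, the size of $G$ is polynomial in $|\mathcal{T}|+q$, and the two directions of the equivalence (perfect matching $\Leftrightarrow$ full decomposition $\Leftrightarrow$ $\IScr(\cP)=s_0$) follow directly, establishing that \kipdsp on a DAG is NP-complete for every $k \geq 3$.
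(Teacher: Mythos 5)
Your overall strategy is legitimate and genuinely different from the paper's: you normalize all weights to $1$ and a single signature so that every interesting $k$-path has the same score $c_k$, and then use the equality $\IScr(\cP)=c_k\,(m/k)$ to force $\cP$ to be an exact partition of $E$ into directed $k$-edge paths, reducing the problem to a directed $P_{k+1}$-decomposition question attacked via \textsc{3DM}. The paper instead reduces from exact $3$-cover (and exact $k$-cover for $k\geq 4$), keeps \emph{heavy} element edges of weight $p$ alongside unit edges, and never requires all edges to be covered: the selection of a triple is encoded purely through the score bookkeeping, via the two achievable per-gadget totals $\Win$ and $\Wout$, so the target $s_0=p\Wout+k(p-1)q\log 2+q\log((k+1)!)$ is attainable if and only if $q$ gadgets achieve $\Win$, i.e.\ if and only if an exact cover exists. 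Your route avoids weights entirely, which is elegant, but it shifts all the difficulty onto combinatorial rigidity of the gadget.

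That is exactly where the genuine gap lies: the gadget is never constructed. Your argument requires a DAG gadget whose arc set, together with the three element arcs, admits \emph{precisely} two partitions into directed $3$-paths (one consuming the element arcs, one internal), and you yourself identify producing it as ``the main obstacle.'' But exhibiting such a gadget and verifying the ``no spurious decomposition'' property \emph{is} the proof of this theorem; a reduction whose central gadget exists only as a list of desired properties is a plan, not a proof. Note that the paper's Figure~\ref{fig:3_IPD} gadget would not serve your purpose as-is, since the paper's argument tolerates uncovered edges and leans on the heavy/light weight distinction that you have discarded. A secondary, smaller gap is the padding step for $k\geq 4$: subdividing arcs into chains of in/out-degree-one vertices does not automatically force the $k$-path boundaries to align with the chains --- a path in a decomposition may legally terminate at an internal vertex of a chain with another path starting there, so ``absorbed into the path they extend'' needs its own argument. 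The paper sidesteps this by giving an explicit separate gadget for general $k$ (Figure~\ref{fig:k_IPD}) rather than padding the $k=3$ construction.
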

\begin{proof}
  Given a collection $\cP$ of interesting $k$-paths in a directed acyclic graph $G=(V,E)$, we can verify they are edge-disjoint, each path has $k$ edges, and signatures are identical for all edges in each path, all in polynomial time.
  We can compute the interestingness score of each $k$-path $P \in \cP$ using Equation (\ref{eq:IScr}) also in polynomial time, and add the $\IScr(P)$ values to compare with $s_0$ to check for equality.
  Hence \kipdsp is in NP.

  We now reduce the \emph{exact $3$-cover} problem (X$3$C) to \tipd.
  We then show a similar reduction for $k \geq 4$ as well, proving \kipdsp is NP-complete for $k \geq 3$.
  The latter case for general $k$ subsumes the case for $k=3$.
  We still present the details for $k=3$ separately, as this case reveals the structure of the general reduction in an arguably simpler setting.
  X$3$C is a version of one of the 21 NP-complete problems originally introduced by Karp \cite{Ka1972}, and is defined as follows.
  Given a set $X$ with $|X|=3q$ elements and a collection $\cC$ of $3$-element subsets of $X$ with $|\cC| = p$, determine if there exists a subset $\cC' \subseteq \cC$ such that each element of $X$ belongs to exactly one member of $\cC'$.
  Notice that such an exact cover $\cC'$ must necessarily have exactly $q$ members. 
  Also, we assume $p \geq q \geq 3$ (else the instance will be trivial).

  Given an instance of X$3$C, we create a directed acyclic graph $G=(V,E)$ for an instance of \tipdsp as follows.
  Each element $x \in X$ corresponds to a unique directed edge in $G$.
  Corresponding to each $3$-element set $\{x,y,z\} \in \cC$, we add to $G$ a directed acyclic graph object as shown in Figure \ref{fig:3_IPD}.
  The edges corresponding to all $x \in X$ are assigned the large weight $\omega = p$ making them ``heavy'' edges, while the rest of the edges are all assigned unit weights.
  Further, we assume $\Sig(e)$ is identical for all edges $e\in E$.
  The three ``V''-shaped $3$-paths in the top of Figure \ref{fig:3_IPD} are referred to as the $x$-, $y$-, and $z$-paths.
  Notice that by this construction, $G$ can have at most $|V| \leq 13p$ vertices and $|E| \leq 12p$ edges.
  
  \begin{figure}[ht!]
    \centering
    \includegraphics[keepaspectratio=yes, width=.7\textwidth]{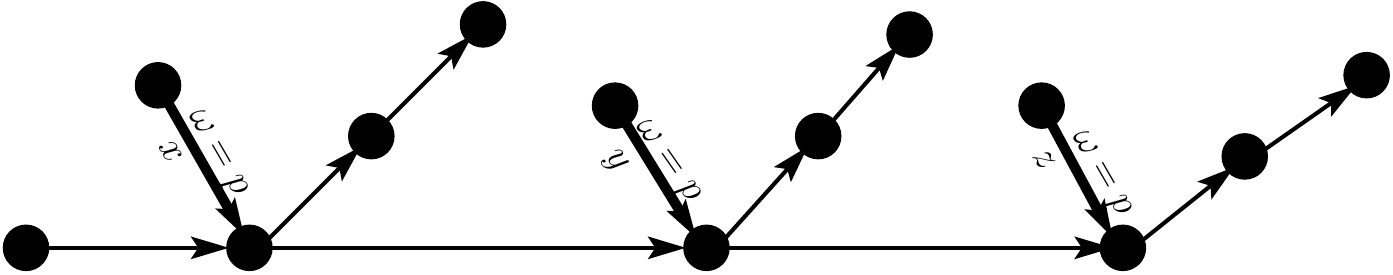}
    \caption{Graph object corresponding to set $\{x,y,z\} \in \cC$ in X$3$C. Thin edges all have $\omega=1$.}
    \label{fig:3_IPD}
  \end{figure}

  Let $\Win = 3(p \log 2 + \log 3 + \log 4) + (\log 2 + \log 3 + \log 4) = 4 \log 24 + 3(p-1) \log 2$, and $\Wout = 3 \log 24$.
  From a graph object as shown above, we observe that $4$ edge-disjoint interesting $3$-paths can be chosen by \tipdsp each with interestingness score $\Win$ if and only if the $x$-, $y$-, and $z$-paths are chosen along with \emph{one} other interesting $3$-path as shown in Figure \ref{fig:3_IPDpathsin}.
  Further, each edge corresponding to an element in $X$ may belong to only one $3$-path.
  Thus, at most $q$ such graph objects may contribute the score of $\Win$ to the total interestingness score.
  The remaining $p-q$ graph objects may contribute a score of at most $\Wout$ corresponding to the selection of the $3$ edge-disjoint interesting $3$-paths shown in Figure \ref{fig:3_IPDpathsout}, which avoid the edges corresponding to any $x \in X$.
  If $q$ such graph objects do contribute $\Win$ each to the total interestingness score, it is clear that the corresponding $q$ triplet elements in $\cC$ form an exact $3$-cover of $X$.
  Further, \tipdsp on $G$ will identify exactly $4q + 3(p-q) = 3p+q$ edge-disjoint interesting $3$-paths with a total interestingness score of exactly $q \Win + (p-q)\Wout  = p \Wout + 3(p-1)q \log 2+q\log 24$.

  \medskip
  \begin{figure}[ht!]
    \begin{minipage}{.55\textwidth}
      \hspace*{0.05in}
      \includegraphics[keepaspectratio=yes, width=0.85\textwidth]{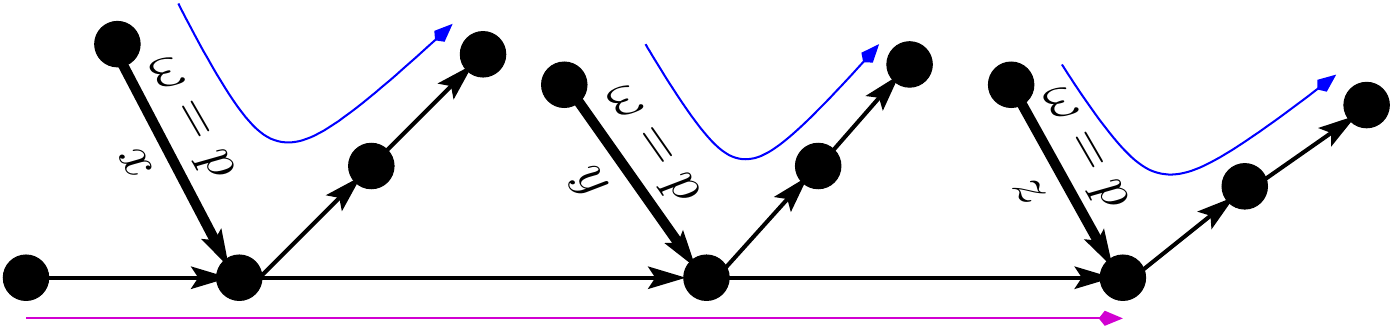}
      \subcaption{$3$-paths for an element in $\cC'$. } \label{fig:3_IPDpathsin}
    \end{minipage}
    \hspace*{-0.72in}
    \begin{minipage}{.55\textwidth}
      \hspace*{0.4in}
     \includegraphics[keepaspectratio=yes, width=0.85\textwidth]{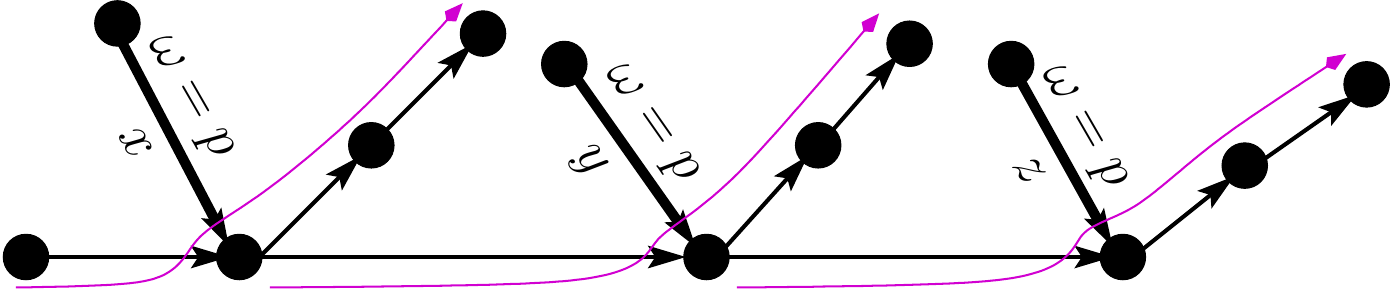}
      \subcaption{$3$-paths for an element not in $\cC'$.} \label{fig:3_IPDpathsout}
    \end{minipage}
    \caption{Choices of interesting $3$-paths in graph objects for elements in $\cC'$ and $\cC \setminus \cC'$.}   \label{fig:3_IPDpaths}
  \end{figure}

  Conversely, if X$3$C has an exact $3$-cover $\cC'$, we choose the $4$ edge-disjoint interesting $3$-paths (recall we assume identical signatures for all edges in $G$) with interestingness score $\Win$ as described above in the graph object in $G$ corresponding to each of the $q$ $3$-element sets in $\cC'$.
  For the $(p-q)$ $3$-element sets in $\cC \setminus \cC'$, we choose the $3$ interesting $3$-paths with interestingness score $\Wout$ each in the corresponding graph objects in $G$.
  This collection of $p$ edge-disjoint interesting $3$-paths in $G$ will have a total interestingness score of exactly $p \Wout + 3(p-1)q \log 2+q\log 24$.

  Thus $X$ has an exact $3$-cover if and only if $\tipdsp$ on $G$ has a target total interestingness score of $s_0 = p \Wout + 3(p-1)q \log 2+q\log 24$, proving $\tipdsp$ is NP-complete.

  \medskip
  We now extend this result to \kipdsp for $k \geq 4$.
  To this end, we reduce the \emph{exact $k$-cover} problem (X$k$C) to \kipdsp for general $k \geq 4$.
  The X$k$C problem is a generalization of X$3$C, and is defined as follows.
  Given a set $X$ with $|X|=kq$ elements and a collection $\cC$ of $k$-element subsets of $X$ with $|\cC| = p$, determine if there exists a subset $\cC' \subseteq \cC$ such that each element of $X$ belongs to exactly one member of $\cC'$.
  Notice that such an exact cover $\cC'$ must necessarily have exactly $q$ members. 
  Also, we assume $p \geq q \geq k$ (else the instance will be trivial).

  Given an instance of X$k$C, we create a directed acyclic graph $G$ for an instance of \kipdsp as follows.
  Each element $x \in X$ corresponds to a unique directed edge in $G$.
  For each $k$-element set $\{x_1,x_2,\dots,x_k\} \in \cC$, we add to $G$ a corresponding directed acyclic graph object as shown in Figure \ref{fig:k_IPD}.
  The edges corresponding to all $x \in X$ are assigned the large weight $\omega = p$ (giving ``heavy'' edges), while the rest of the edges are all assigned unit weights.
  Further, we assume $\Sig(e)$ is identical for all edges $e\in E$.
  The $k$ ``V''-shaped $k$-paths in the top of Figure \ref{fig:k_IPD} are referred to as the $x_1$-, $x_2$-,$\dots x_k$-paths.
  Notice that by this construction, $G$ can have at most $|V| \leq (k(k+1)+1)p$ vertices and $|E| \leq (k(k+1))p$ edges.
  
  \begin{figure}[ht!]
    \centering
    \includegraphics[keepaspectratio=yes, width=.7\textwidth]{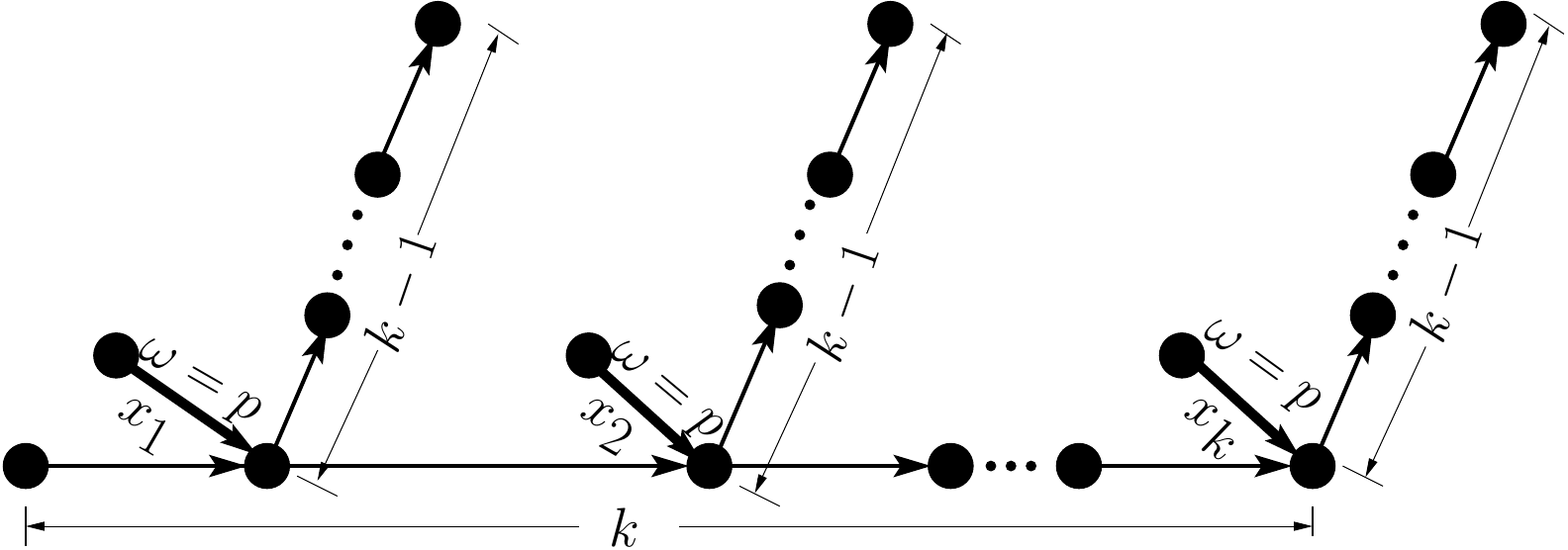}
    \caption{Graph object corresponding to set $\{x_1,x_2,\dots,x_k\} \in \cC$ in X$k$C. Thin edges all have $\omega=1$.}
    \label{fig:k_IPD}
  \end{figure}

  Let $\Win = k(p \log 2 + \log 3 + \log 4+\dots+\log (k+1) ) + (\log 2 + \log 3 + \dots+\log (k+1) ) = (k+1) \log ((k+1)!) + k(p-1) \log 2$, and $\Wout = k \log ((k+1)!)$.
  From a graph object as shown above, we observe that $k+1$ edge-disjoint interesting $k$-paths can be chosen by \kipdsp each with interestingness score $\Win$ if and only if the $x_1$-, $x_2$-, $\dots$ $x_k$-paths are chosen along with \emph{one} other $k$-path as shown in Figure \ref{fig:k_IPD_pathsin}.
  Further, each edge corresponding to an element in $X$ may belong to only one $k$-path.
  Thus, at most $q$ such graph objects may contribute the score of $\Win$ each to the total interestingness score.
  The remaining $p-q$ graph objects may contribute a score of at most $\Wout$ each corresponding to the selection of the $k$ edge-disjoint interesting $k$-paths shown in Figure \ref{fig:k_IPD_pathsout}, which avoid the edges corresponding to any $x \in X$.
  If $q$ such graph objects do contribute $\Win$ each to the total interestingness score, it is clear that the corresponding $q$ $k$-tuple elements in $\cC$ form an exact $k$-cover of $X$.
  Further, \kipdsp on $G$ will identify exactly $(k+1)q + k(p-q) = kp+q$ edge-disjoint interesting $k$-paths with a total interestingness score of exactly $q \Win + (p-q)\Wout  = p \Wout + k(p-1)q \log 2+q\log ((k+1)!)$.

  \medskip
  \begin{figure}[ht!]
    \begin{minipage}{.55\textwidth}
      \hspace*{0.05in}
      \includegraphics[keepaspectratio=yes, width=0.85\textwidth]{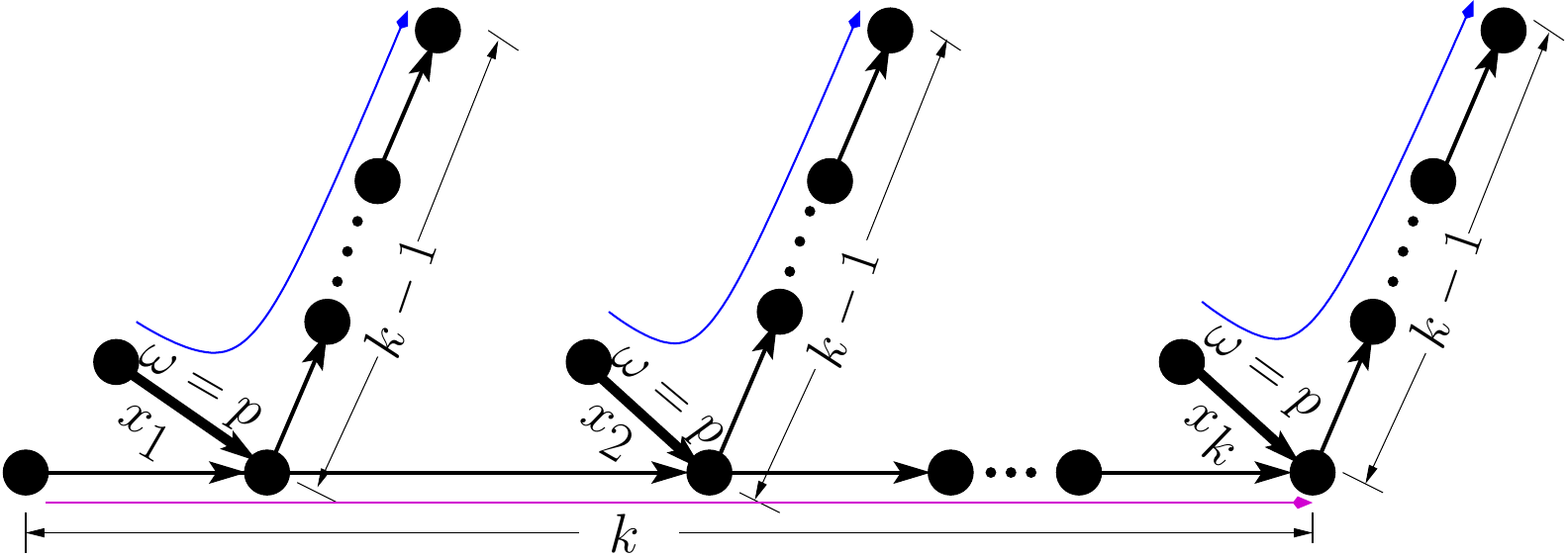}
      \subcaption{$k$-paths for an element in $\cC'$. } \label{fig:k_IPD_pathsin}
    \end{minipage}
    \hspace*{-0.72in}
    \begin{minipage}{.55\textwidth}
      \hspace*{0.42in}
     \includegraphics[keepaspectratio=yes, width=0.85\textwidth]{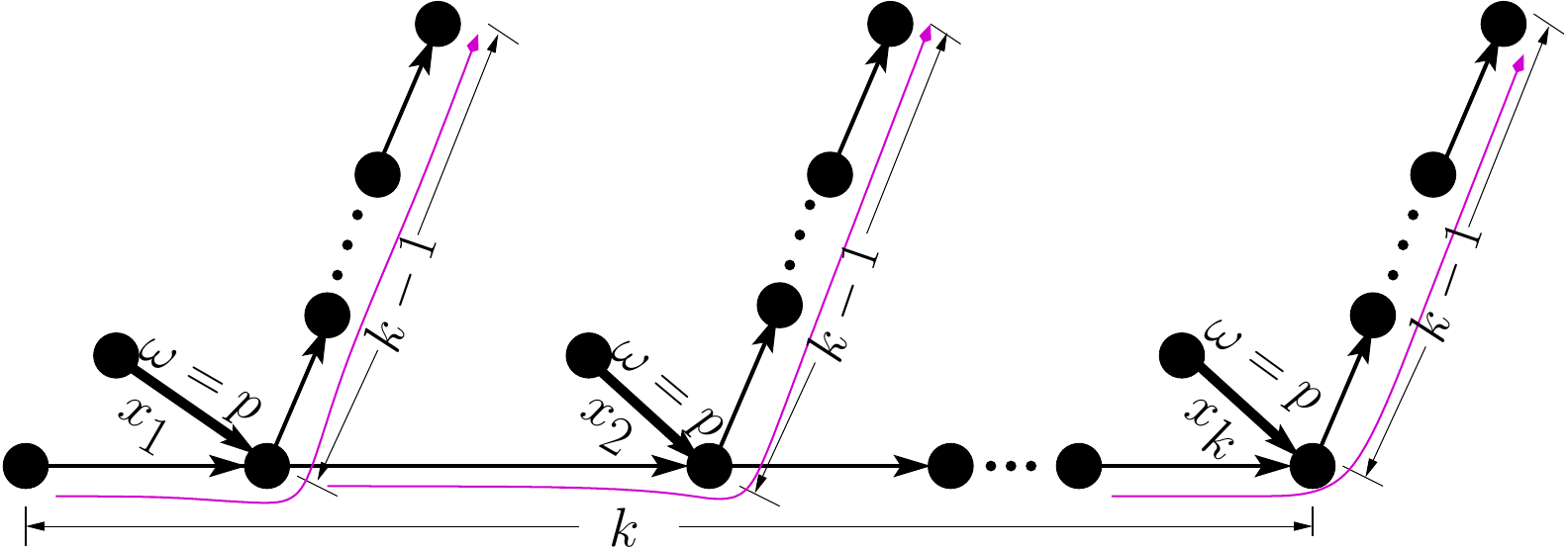}
      \subcaption{$k$-paths for an element not in $\cC'$.} \label{fig:k_IPD_pathsout}
    \end{minipage}
    \caption{Choices of interesting $k$-paths in graph objects for elements in $\cC'$ and $\cC \setminus \cC'$.}   \label{fig:k_IPD_paths}
  \end{figure}

  Conversely, if X$k$C has an exact $k$-cover $\cC'$, we choose the $k+1$ edge-disjoint interesting $k$-paths (again, we assume identical signatures for all edges in $G$) with interestingness score $\Win$ as described above in the graph object in $G$ corresponding to each of the $q$ $k$-element sets in $\cC'$.
  For the $(p-q)$ $k$-element sets in $\cC \setminus \cC'$, we choose the $k$ edge-disjoint interesting $k$-paths with interestingness score $\Wout$ each in the corresponding graph objects in $G$.
  This collection of $p$ edge-disjoint interesting $k$-paths in $G$ will have a total interestingness score of exactly $p \Wout + k(p-1)q \log 2+q\log ((k+1)!)$.

  Thus $X$ has an exact $k$-cover if and only if $\kipdsp$ on $G$ has a target total interestingness score of $s_0 = p \Wout + k(p-1)q \log 2+q\log ((k+1)!)$, proving $\kipdsp$ is NP-complete for $k \geq 4$.
\end{proof}

\noindent Note that since \kipsp is NP-complete for DAGs, it is also NP-complete for directed graphs.

\section{The Interesting Paths (\ip) Problem} \label{sec:IP}
The goal of \ipsp is to find a set $\cP$ of edge-disjoint interesting paths of possibly varying lengths ($1$ to $n-1$) such that the sum of their interestingness scores is maximized.
The determination of the precise complexity class for \ipsp is an open problem. 
However, based on the hardness results for \kipsp (Section~\ref{sec:kIP}), we conjecture that the \ipsp is also intractable.
Here we present an efficient heuristic for \ipsp on DAGs by employing the exact algorithm for \maxipsp on DAGs (in Section \ref{ssec:MaxIPDAG}) as a subroutine.
We also estimate lower and upper bounds on the maximum total interestingness score of \ipsp (see Section \ref{ssec:IP_Bounds}).

\subsection{An efficient heuristic for \ipsp on DAGs} \label{sec:IPheuristic}

We present a polynomial time heuristic to find a set of edge-disjoint interesting paths $\cP$ in a DAG with high total interestingness score.
We do not provide any guarantee on the optimality or (approximation) quality of the collection of interesting paths $\cP$. 

Our method, termed Algorithm~\ref{alg:GreedyIP}, uses a greedy strategy by iteratively calling the exact algorithm for \maxipsp (Section~\ref{ssec:MaxIPDAG}).
The idea is to iteratively detect a maximum interesting path, add it to the working set of solutions, remove all the edges in that path, and recompute \maxipsp on the remaining graph, until there are no more edges left.

\begin{center}
  \begin{minipage}{\linewidth}
    \begin{algorithm}[H]
      \KwIn{DAG $G=(V,E)$ with $\omega(e),\,\Sig(e) ~\forall e \in E$
      }
      \KwOut{A set of edge-disjoint interesting paths $\cP$ in $G$}
      $\mathcal{P} = \emptyset$\\
      \Repeat{$E=\emptyset$}
             {
               $P\gets $ Compute \maxipsp on $G=(V,E)$ and return a most interesting path\\
               $\mathcal{P}\gets \mathcal{P}\cup \{P\}$\\
               Remove edges in $P$ from $E$\\
             }
      \Return $\mathcal{P}$
      \caption{Greedy Heuristic for \ipsp on DAGs}
      \label{alg:GreedyIP}
    \end{algorithm}
  \end{minipage}
\end{center}

\paragraph*{Complexity Analysis:}
The runtime to compute \maxipsp on $G=(V,E)$ in the first step is $O(m\delta_{\rm max}d_{\rm in})$, as described in Section~\ref{sssec:algoimpr}.
Recall that $\delta_{\rm max}$ is the diameter of the DAG and $d_{\rm in}$ is the maximum indegree of any vertex in $V$.
Therefore, if we denote $p$ to be the number of iterations (equivalently, the number of interesting paths found), then the overall runtime complexity is $O(pm\delta_{\rm max}d_{\rm in})$.
However, we expect the performance of the algorithm in practice to be much faster.
Note that at least one edge is, and at most $m$ edges are, eliminated in each iteration, thereby implying $1 \leq p \leq m$ here. 

Consider the worst case of elimination where one edge is eliminated in each iteration, i.e., $p=m$.
The graph must be very sparse in this case, i.e., $m=\Theta(n)$, causing our algorithm for \maxipsp to perform only $O(m)$ work per iteration.
Therefore the overall runtime is $O(m^2)$, or equivalently, $O(n^2)$.

On the other hand, consider the case where the number of edges reduces by a constant factor $c$ at every iteration. 
This setting implies $p=O(\log_c(m))$, while the work performed from one iteration to the next will also continually reduce by a factor of $c$. 
Hence the overall runtime can still be bounded by $O(m\delta_{\rm max}d_{\rm in})$, the cost of \maxip. 
Further, from an application standpoint, such a greedy iterative approach can be terminated whenever an adequate number of ``top'' interesting paths are identified. 

\subsubsection{An efficient heuristic for \kipsp on DAGs} \label{sssec:kIPheuristic}
The above heuristic for \ipsp can be easily modified to devise a heuristic for \kipsp on DAGs. 
Algorithm~\ref{alg:GreedykIP} summarizes the main steps. 
The main idea is to modify the exact algorithm for \maxipsp on a DAG such that it initializes a recurrence table of size $m\times k$ (instead of $m\times (n-1)$), and then use that table to iteratively compute \maxipsp paths.
The only constraint here is that each such \maxipsp path should originate from the $k$-th column during backtracking, so that paths output are guaranteed to be of length $k$. 
The runtime is bounded by $O(mrd_{\rm in})$, where $r=\min\{k,\delta_{\rm max}\}$.

\begin{center}
\begin{minipage}{\linewidth}
  \begin{algorithm}[H]
    \KwIn{DAG $G=(V,E)$ with $\omega(e),\,\Sig(e) ~\forall e \in E$
    }
    \KwOut{A set of edge-disjoint interesting $k$-paths $\cP$ in $G$}
    Initialize an $m\times k$ table $T^\prime$\\
    $\mathcal{P} = \emptyset$\\
    \Repeat{$E=\emptyset$}
           {
             $P^\prime\gets $ Compute \maxipsp on $G=(V,E)$ using $T^\prime$, and return\\
             \hspace*{0.35in} a most interesting path \emph{ending in column $k$}\\
             $\mathcal{P}\gets \mathcal{P}\cup \{P^\prime\}$\\
	     Remove edges in $P^\prime$ from $E$\\
           }
    \Return $\mathcal{P}$
    \caption{Greedy Heuristic for \kipsp on DAGs}
    \label{alg:GreedykIP}
  \end{algorithm}
\end{minipage}
\end{center}

\begin{remark}
  Ideas used in Algorithms \ref{alg:GreedyIP} and \ref{alg:GreedykIP} could be combined to develop a heuristic for \textsc{AtLeast}-\kip, a modified version of \kipsp that seeks to find a collection of interesting paths in $G$ where each path has \emph{at least} $k$ edges.
  We have implemented this heuristic in our software suite (see Section \ref{sec:Implementation} for details).
\end{remark}

\subsection{Bounds for \ip}
\label{ssec:IP_Bounds}
Let $\cP^*$ represent an optimal set of paths for an instance of \ip.
We derive upper and lower bounds on its total interestingness score $\IScr(\cP^*)$.

Let $P_{\max}(i)$ denote a maximum interesting path (of arbitrary length) ending at a given edge $e_i\in E$. 

\begin{lemma}
 $\IScr(\cP^*)\leq \sum_{e_i\in E}{\IScr(P_{\max}(i))}$.
\end{lemma}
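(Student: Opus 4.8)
The plan is to prove the bound by a charging argument that exploits the fact that $\cP^*$ exactly covers $E$, so that every edge $e_i \in E$ lies in exactly one path of $\cP^*$; call it $P_{e_i}$. The starting point is to rewrite the objective as a single sum over edges rather than over paths. Since $\IScr(P) = \sum_{e \in P} \omega(e)\log(1+\rank(e,P))$ and the paths in $\cP^*$ partition $E$, I would first observe
\begin{equation*}
  \IScr(\cP^*) = \sum_{P \in \cP^*} \IScr(P) = \sum_{e_i \in E} \omega(e_i)\,\log\!\big(1 + \rank(e_i, P_{e_i})\big),
\end{equation*}
simply by regrouping the double sum. It then suffices to show, edge by edge, that $\omega(e_i)\log(1 + \rank(e_i, P_{e_i})) \le \IScr(P_{\max}(i))$.

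For the edgewise bound, the key step is to look at the prefix of $P_{e_i}$ that ends at $e_i$: let $Q_i$ be the subpath of $P_{e_i}$ consisting of $e_i$ together with all edges preceding it. Because $Q_i$ is a contiguous initial segment of a directed simple path all of whose edges share one signature, it is itself an interesting path ending at $e_i$, and crucially $\rank(e_i, Q_i) = \rank(e_i, P_{e_i})$, since truncating from the end does not change the position of $e_i$ counted from the start. As every edge contribution $\omega(e)\log(1+r)$ is nonnegative (here $\omega \ge 0$ and $r \ge 1$), the score of $Q_i$ dominates the single contribution of its last edge, giving $\IScr(Q_i) \ge \omega(e_i)\log(1 + \rank(e_i, P_{e_i}))$. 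Finally, because $P_{\max}(i)$ is by definition a maximum-score interesting path ending at $e_i$ and $Q_i$ is one such path, $\IScr(P_{\max}(i)) \ge \IScr(Q_i)$. Chaining these inequalities and summing over all $e_i \in E$ yields $\sum_{e_i \in E} \IScr(P_{\max}(i)) \ge \IScr(\cP^*)$, which is the claim.

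The main obstacle is choosing the right object to charge each edge's contribution to. A naive attempt would bound each path $P \in \cP^*$ by $\IScr(P_{\max}(\cdot))$ evaluated at its terminal edge, but this produces only $|\cP^*|$ terms and loses the per-edge granularity needed to match the right-hand side. The resolution is the prefix $Q_i$: it is the unique truncation that both keeps $e_i$ at its original rank (so the logarithmic inflation factor is preserved) and remains a legitimate interesting path ending at $e_i$ (so the maximality of $P_{\max}(i)$ applies). The only facts being used are nonnegativity of edge contributions and the inheritance of signature-homogeneity by prefixes; once these are noted the inequality is immediate, so I expect no further technical difficulty.
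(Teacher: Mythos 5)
Your proof is correct, and while it rests on the same underlying charging idea as the paper's proof, it is organized quite differently and is noticeably more rigorous. The paper argues by a case analysis: Case A treats the degenerate situation where no two maximum paths $P_{\max}(i)$, $P_{\max}(j)$ intersect (claiming $\cP^*$ then consists of single edges and equality holds), and Case B treats intersections by noting that if $P_{\max}(j)$ is excluded from $\cP^*$, the exact-cover property still guarantees an ``alternative path'' $P'(j)$ ending at $e_j$ that either lies in $\cP^*$ or is a subpath of a member of $\cP^*$, and $\IScr(P'(j)) \leq \IScr(P_{\max}(j))$ by maximality. That alternative path is exactly your prefix $Q_i$, but the paper never supplies the two bookkeeping facts that make the argument airtight and that your write-up makes explicit: (i) the regrouping $\IScr(\cP^*)=\sum_{e_i\in E}\omega(e_i)\log\bigl(1+\rank(e_i,P_{e_i})\bigr)$, which reduces the lemma to a per-edge inequality and renders the case split unnecessary; and (ii) that truncating a path after $e_i$ preserves $\rank(e_i,\cdot)$, so the single-edge contribution of $e_i$ inside $P_{e_i}$ is dominated by $\IScr(Q_i)$, hence by $\IScr(P_{\max}(i))$. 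Your approach buys economy and rigor---Case A becomes a trivial instance of the general argument rather than a separate case, and the step connecting ``score of the subpath'' to ``contribution of the edge to $\IScr(\cP^*)$,'' which the paper only gestures at, is closed by the rank-preservation observation. The paper's version, in exchange, emphasizes the structural picture of when the bound is tight (its Case A), but as a proof it is the weaker of the two.
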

\begin{proof}
Consider an arbitrary path $P\in\cP^*$.
We first note that individual paths that are members of an optimal solution ($\cP^*$) for the \ipsp problem can end at any arbitrary \emph{non-source} vertex in $G=(V,E)$ (see Figure \ref{fig:ip_boundary} for an illustration). 

\medskip
\begin{figure}[!ht]
\centering
\includegraphics[scale=0.6]{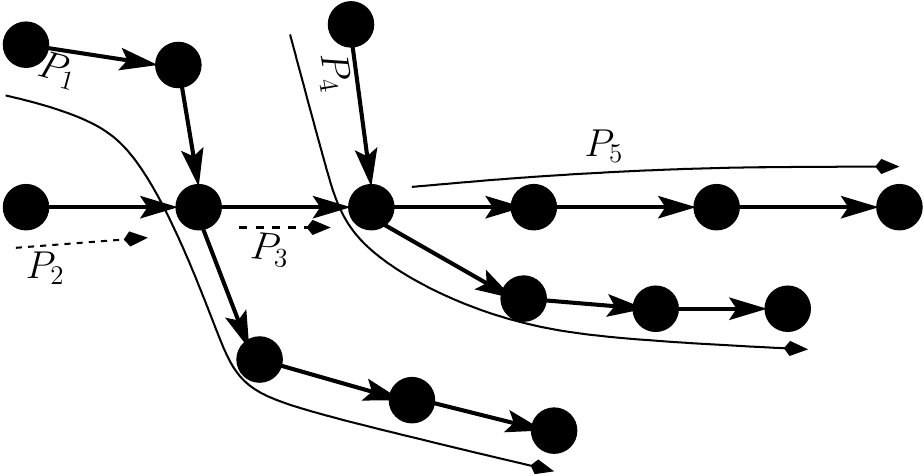}
\caption{Five interesting paths labeled $P_1,\dots,P_5$.
The paths $ P_2$ and $P_3$ end at non-source vertices.}
\label{fig:ip_boundary}
\end{figure}

Without loss of generality, let us assume that the input graph contains only vertices with degree at least one (as vertices with degree zero cannot contribute to any interesting path).
We consider two sub-cases:

\medskip
\emph{Case A: No two maximum scoring paths ending at two different edges $e_i$ and $e_j$ intersect, i.e., $P_{\max}(i)\cap P_{\max}(j)=\phi$, $\forall i,j\in [1,m]$ and $i\ne j$}.\\
This case can occur only if the number of edges ($m$) is equal to the number of source vertices. 
This setting implies that $\cP^*$ is comprised of $m$ paths, where each path $P \in \cP^*$ is a unique edge $e_i\in E$. 
Therefore, $\IScr(\cP^*)=\sum_{e_i\in E}{\IScr(P_{\max}(i))}$ in this case.

\emph{Case B: There exists at least two maximum interesting paths ending at two different edges $e_i$ and $e_j$ that intersect, i.e., $P_{\max}(i)\cap P_{\max}(j)\ne\phi$}.\\
This case implies that at least one of these two paths is \emph{not} a member of $\cP^*$ (by definition of the \ipsp problem);
let this non-member path be $P_{\max}(j)$ without loss of generality. 
Since all edges are covered by $\cP^*$ by definition of \ip, there still has to exist an alternative path ending in $e_j$ that is either directly contained in $\cP^*$ or contained as a \emph{subpath} of a longer path in $\cP^*$;
let us refer to this alternative path as $P'(j)$.  
Since $P_{\max}(j)$ is an optimal interesting path ending at edge $e_j$, $\IScr(P'(j))\leq \IScr(P_{\max}(j))$. 
In other words, the contribution of $\cP'(j)$ to $\IScr(\cP^*)$ cannot exceed the contribution of $P_{\max}(j)$ to $\IScr(\cP^*)$.
Therefore, $\IScr(\cP^*)\leq \sum_{e_i\in E}{\IScr(P_{\max}(i))}$ in this case as well.
\end{proof}

We now present a lower bound for $\IScr(\cP^*)$.

\begin{lemma}
$\IScr(\cP^*)\geq \sum_{e_i\in E}(\omega(e_i) \times \log 2)$.
\end{lemma}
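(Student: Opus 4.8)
The plan is to prove the lower bound by comparing the optimal solution against a single explicit feasible solution, rather than by any intricate estimate. By the definition of \ip, every feasible collection must exactly cover $E$, so I would first exhibit the trivial feasible collection $\cP_0$ consisting of $m$ singleton interesting $1$-paths, one per edge $e_i \in E$. This $\cP_0$ is feasible: a single edge is vacuously an interesting path (the signature-matching condition of Definition~\ref{def:IPath} is trivially satisfied for one edge), and these $1$-paths together cover $E$ exactly. Computing its score from Equation~(\ref{eq:IScr}), each path has its one edge at rank $1$, so $\IScr(\cP_0) = \sum_{e_i \in E} \omega(e_i)\,\log(1+1) = \sum_{e_i \in E} \omega(e_i)\,\log 2$.

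The bound then follows in one line: since $\cP^*$ is an optimal solution to \ipsp and $\cP_0$ is a feasible one, optimality gives $\IScr(\cP^*) \geq \IScr(\cP_0) = \sum_{e_i \in E}\bigl(\omega(e_i)\times\log 2\bigr)$, as claimed. There is essentially no hard step here; the only thing to check is that $\cP_0$ is genuinely admissible for \ip, which reduces to confirming that a length-$1$ path meets Definition~\ref{def:IPath} and that the singletons form an exact cover of $E$.

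As an alternative (and to make transparent why $\log 2$ is the right constant), I would give a direct accounting argument. Writing the total score as a sum over paths and then over edges, each edge $e_i$ lies in exactly one path $P \in \cP^*$ at some rank $\rank(e_i,P) \geq 1$, and contributes $\omega(e_i)\,\log(1+\rank(e_i,P)) \geq \omega(e_i)\,\log 2$ because $\omega(e_i) \geq 0$. Summing over all edges yields the bound. The one nonroutine point to verify is that the exact-cover property of \ip permits reorganizing $\sum_{P\in\cP^*}\sum_{e\in P}(\cdot)$ into a single sum $\sum_{e_i\in E}(\cdot)$ with each edge counted exactly once; this is precisely what the condition ``each $e \in E$ is part of exactly one $P \in \cP^*$'' guarantees, so it presents no real obstacle.
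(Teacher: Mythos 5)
Your proposal is correct and its main argument is essentially identical to the paper's proof: both exhibit the trivial feasible solution consisting of $m$ singleton $1$-paths (one per edge, each contributing $\omega(e_i)\times\log 2$ by Equation~(\ref{eq:IScr})) and invoke optimality of $\cP^*$. Your alternative per-edge accounting argument---using the exact-cover property to note that every edge appears at rank at least $1$ in exactly one path of $\cP^*$---is also valid (and does not even require constructing a competing solution), but the paper stops at the first argument.
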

\begin{proof}
Since $\cP^*$ covers all edges in $E$, a trivial (albeit not necessarily optimal) solution $\cP'$ for \ipsp can be constructed by including every edge as a distinct interesting path in the graph, i.e., $\cP'=\{{e_i} | 1 \leq i \leq m \}$.
Therefore, $\IScr(\cP^*)\geq \IScr(\cP')=\sum_{e_i\in E}(\omega(e_i) \times \log 2)$ follows from Equation (\ref{eq:IScr}).
\end{proof}

\section{Implementation and Testing}
\label{sec:Implementation}

We have implemented the core algorithms  for long interesting path detection as part of the HYPPO-X repository, which includes our open source implementation of the Mapper framework.
The HYPPO-X software repository is open source, and is available at \href{https://xperthut.github.io/HYPPO-X}{https://xperthut.github.io/HYPPO-X}.
The core computational modules are implemented in C++ and the visualization modules are implemented using the Javascript visualization library, D3~\cite{teller2013data}. 

\smallskip
Currently, the HYPPO-X repository includes implementations for the following algorithms presented in this paper:
\begin{compactenum}\itemsep=-0.05ex
\item
  the optimized exact algorithm for \maxipsp on DAGs described in Section~\ref{sssec:algoimpr};
\item 
  the greedy iterative heuristic for \ipsp on DAGs described in Algorithm~\ref{alg:GreedyIP}; and
\item 
  greedy iterative heuristics for \kipsp and \textsc{AtLeast}-\kip on DAGs as described in Section~\ref{sssec:kIPheuristic}.
\end{compactenum}

\subsection{Experimental testing}
\label{sec:Experimental}
We have been testing and evaluating our implementations on data sets obtained from an ongoing plant phenomics research project. 
Phenomics is an emerging branch of modern biology that involves the analysis of multiple types of data (genomic, phenotypic, and environmental) acquired using high-throughput technologies. 
A core goal of plant phenomics research is to understand how different crop genotypes (G) interact with their environments (E) to produce varying performance traits (phenotypes (P));
this goal is often summarized as $G\times E\rightarrow P$ \cite{martin2013plant}.

We present sample results from our analysis of a real world maize data set (accessible from the HYPPO-X repository).
This data set consists of phenotypic and environmental measurements for two maize genotypes (abbreviated here for simplicity as A and B), grown in two geographic locations (Nebraska (NE) and Kansas (KS)).
The data consists of daily measurements of growth rate alongside multiple environmental variables, over the course of the entire growing season (100 days). 
In our analysis, each ``point'' refers to a unique [genotype, location, time] combination.
Consequently, the above data set consists of 400 points. Here, ``time'' was measured as Days After Planting (DAP). 

Figure~\ref{fig:phenomics-result} shows an example screenshot of an output generated by running our greedy heuristic for \textsc{AtLeast}-\kip on a subset of the above data set containing only genotype B points.
For the purpose of this analysis, we used humidity and DAP as two filter functions (i.e., independent variables) and growth rate as our target function (i.e., dependent variable) for crop performance.
\begin{figure}[htb!]
\centering
\includegraphics[keepaspectratio=yes, width=\textwidth]{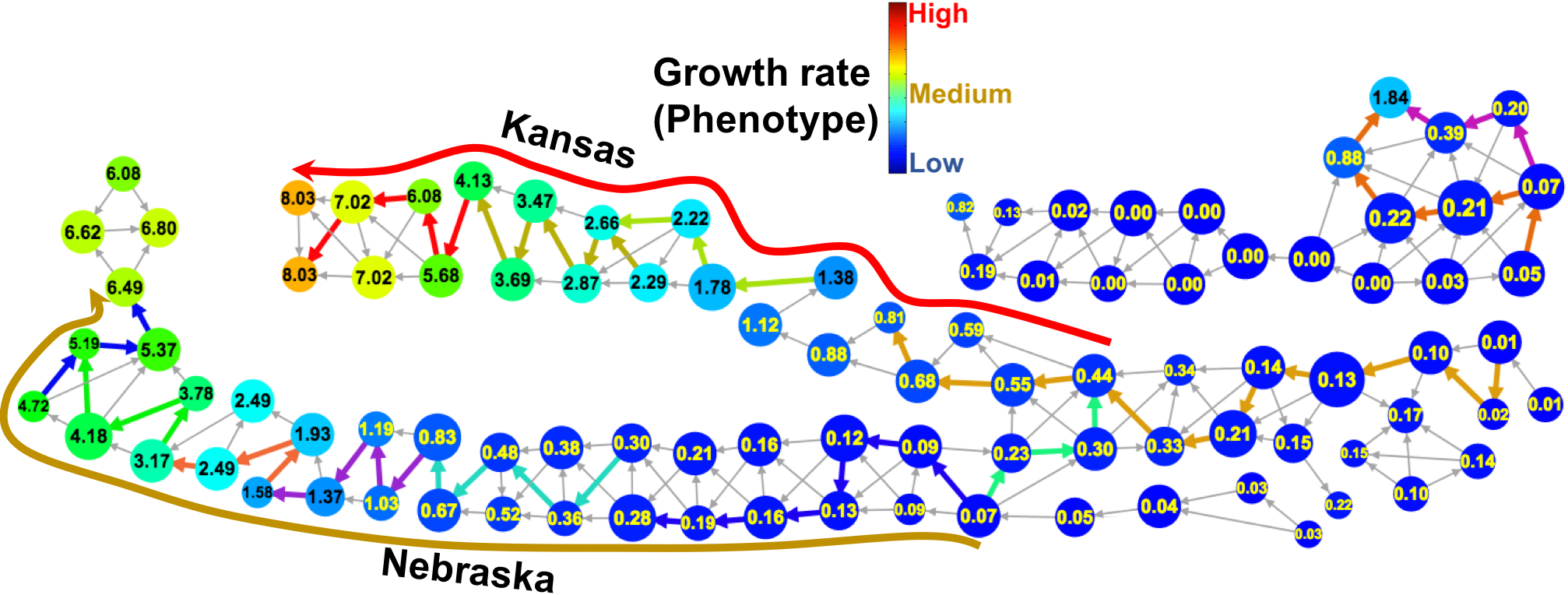}
\caption{
A screenshot of our HYPPO-X tool for identifying interesting paths in Mapper. 
The results correspond to our experiments with a plant phenomics data set (real world maize crop
data).
An interactive version of this visualization can be accessed at \href{https://xperthut.github.io/HYPPO-X}{https://xperthut.github.io/HYPPO-X}.
}
\label{fig:phenomics-result}
\end{figure}

Our tool uncovered two long interesting paths---the red-colored path corresponds to a trail of clusters
from the Kansas subpopulation, whereas the orange-colored path originates from the Nebraska subpopulation.
Note that the location information was \emph{not} used {\it a priori} by our framework---our path detection was entirely unsupervised! 
More importantly, the two long interesting paths helped us detect growth rate variations that are potentially tied to specific variations in both humidity and DAP (time), between the crops of these two locations (KS vs NE).
We are currently working with our plant science collaborators to better understand the basis for these observed variations, with an aim to postulate hypotheses. 

Visualizations of the outputs generated in the above analysis are available on the HYPPO-X repository.
More details about this plant phenomics use case can be found in our related application-oriented paper~\cite{KaKaKrSc2017}.

\section{Discussion} \label{sec:discussion}

We have proposed a general framework for quantifying the significance of features in the Mapper in terms of their interestingness scores.
The associated optimization problems \maxip, \kip, and \ipsp constitute a new class of nonlinear longest path problems on directed graphs.
We have not characterized the complexity of problem \ip.
Judging from the fact that \kipsp is NP-complete even on DAGs, we suspect \ipsp is NP-complete as well.

Our framework for quantifying interesting paths could be modified to quantify branches in flares as well as holes.
When the graph $G$ is a DAG, two interesting $k$-paths that start and end at the same pair of vertices could be characterized as a $2k$-hole, i.e., a cycle with $2k$ edges.
One could alternatively use persistent homology tools to characterize holes---by identifying ``long'' generators around them.
A 2-way branching in a flare could be identified by two interesting $k$-paths where one path starts off from a vertex in the middle of the other path.
Alternatively, we could generalize the definition of interestingness score for a path in Equation (\ref{eq:IScr}) to that of a 2-way branch.
Subsequently, we could seek to solve the related optimization problems of identifying the most interesting $2$-way flare, or to identify a collection of $2$-way flares whose total interestingness score is maximized.

While we distinguished the clustering function $g$ from the filter functions $f_i$ for $i=1,\dots,h$, this distinction is not critically used in our framework.
As indicated in Remark \ref{rem:gclust}, one could use $g$ simultaneously as a filter function along with the $f_i$'s, and the overall analysis should still carry through.
More generally, details of how to implement clustering within the Mapper framework has not received much research attention.
In initial work on phenomics \cite{KaKaKrSc2017}, we obtained better results when using $g$ alone to cluster within Mapper (rather than clustering using several, or even all, of the variables).
It would be interesting to characterize the stability of Mapper to varying settings of clustering employed in its construction.
For instance, could we identify a ``small'' subset of variables for use in clustering within Mapper that is optimal in a suitable sense?

While we have proposed an efficient heuristic for \ipsp on DAGs, we are not able to certify the quality of solution obtained by this method.
On the other hand, could we devise approximation algorithms for \ipsp or \kip?
One might have to work under some simplifying assumptions on the distribution of weights $\omega(e)$ or on the structure of the graph $G$.
The simplest case to consider appears to that of \ipsp on a DAG with unit weights on all edges.

We study interestingness of features in a given single Mapper.
A natural extension to consider would be to characterize the stability of the highly interesting features.
Could we incorporate our interestingness scores into the mathematical machinery recently developed to obtain results on stability and statistical convergence of the 1-D Mapper \cite{CaMiOu2017,CaOu2017}?
Another natural extension to explore would be to define and efficiently identify interesting \emph{surfaces} (or manifolds in general) in higher dimensional mappers.

\paragraph{Acknowledgments:}

This research is supported by the NSF grant DBI-1661348.
Krishnamoorthy thanks Fr\'ed\'eric Meunier for discussion on the complexity of \maxipsp while visiting MSRI.


\bibliographystyle{plainurl}

\begin{thebibliography}{10}

\bibitem{Al2012}
Muthu Alagappan.
\newblock {From 5 to 13: Redefining the positions in basketball}.
\newblock In {\em {MIT Sloan Sports Analytics Conference}}, 2012.
\newblock
  \href{http://www.sloansportsconference.com/content/the-13-nba-positions-using-topology-to-identify-the-different-types-of-players/}{http://www.sloansportsconference.com/content/the-13-nba-positions-using-topology-to-identify-the-different-types-of-players/}.

\bibitem{CaMiOu2017}
Mathieu Carri{\`e}re, Bertrand Michel, and Steve Oudot.
\newblock Statistical analysis and parameter selection for mapper.
\newblock 2017.
\newblock \href{https://arxiv.org/abs/1706.00204}{arXiv:1706.00204}.

\bibitem{CaOu2017}
Mathieu Carri{\`e}re and Steve Oudot.
\newblock Structure and stability of the one-dimensional mapper.
\newblock {\em Foundations of Computational Mathematics}, Oct 2017.
\newblock \href{https://arxiv.org/abs/1511.05823}{arXiv:1511.05823}.
\newblock \href {http://dx.doi.org/10.1007/s10208-017-9370-z}
  {\path{doi:10.1007/s10208-017-9370-z}}.

\bibitem{DeMeWa2016}
Tamal~K. Dey, Facundo M{\'e}moli, and Yusu Wang.
\newblock Multiscale mapper: Topological summarization via codomain covers.
\newblock In {\em Proceedings of the Twenty-Seventh Annual ACM-SIAM Symposium
  on Discrete Algorithms}, SODA '16, pages 997--1013, Philadelphia, PA, USA,
  2016. Society for Industrial and Applied Mathematics.
\newblock \href{http://arxiv.org/abs/1504.03763}{arXiv:1504.03763}.

\bibitem{Ed1965}
Jack~R. Edmonds.
\newblock {Maximum matching and a polyhedron with $0$,$1$-vertices}.
\newblock {\em {Journal of Research of the National Bureau of Standards Section
  B}}, 69:125--130, 1965.

\bibitem{Ga1990}
Harold~N. Gabow.
\newblock Data structures for weighted matching and nearest common ancestors
  with linking.
\newblock In {\em Proceedings of the First Annual ACM-SIAM Symposium on
  Discrete Algorithms}, SODA '90, pages 434--443, Philadelphia, PA, USA, 1990.
  Society for Industrial and Applied Mathematics.
\newblock \href{https://arxiv.org/abs/1611.07055}{arXiv:1611.07055}.

\bibitem{GuPa1991}
Geoffrey~M. Guisewite and Panos~M. Pardalos.
\newblock Algorithms for the single-source uncapacitated minimum concave-cost
  network flow problem.
\newblock {\em Journal of Global Optimization}, 1(3):245--265, 1991.
\newblock \href {http://dx.doi.org/10.1007/BF00119934}
  {\path{doi:10.1007/BF00119934}}.

\bibitem{Hietal2015}
Timothy~S.C. Hinks, Xiaoying Zhou, Karl~J. Staples, Borislav~D. Dimitrov,
  Alexander Manta, Tanya Petrossian, Pek~Y. Lum, Caroline~G. Smith, Jon~A.
  Ward, Peter~H. Howarth, Andrew~F. Walls, Stephan~D. Gadola, and Ratko
  Djukanović.
\newblock Innate and adaptive {T} cells in asthmatic patients: {R}elationship
  to severity and disease mechanisms.
\newblock {\em Journal of Allergy and Clinical Immunology}, 136(2):323--333,
  2015.
\newblock \href {http://dx.doi.org/10.1016/j.jaci.2015.01.014}
  {\path{doi:10.1016/j.jaci.2015.01.014}}.

\bibitem{houle2010phenomics}
David Houle, Diddahally~R Govindaraju, and Stig Omholt.
\newblock Phenomics: the next challenge.
\newblock {\em Nature reviews genetics}, 11(12):855, 2010.

\bibitem{KaKaKrSc2017}
Methun Kamruzzaman, Ananth Kalyanaraman, Bala Krishnamoorthy, and Patrick
  Schnable.
\newblock Toward a scalable exploratory framework for complex high-dimensional
  phenomics data.
\newblock 2017.
\newblock \href{https://arxiv.org/abs/1707.04362}{arXiv:1707.04362}.

\bibitem{Ka1972}
Richard~M. Karp.
\newblock {Reducibility Among Combinatorial Problems}.
\newblock In R.~E. Miller and J.~W. Thatcher, editors, {\em Complexity of
  Computer Computations}, pages 85--103. Plenum Press, 1972.

\bibitem{Lietal2015}
Li~Li, Wei-Yi Cheng, Benjamin~S. Glicksberg, Omri Gottesman, Ronald Tamler,
  Rong Chen, Erwin~P. Bottinger, and Joel~T. Dudley.
\newblock Identification of type 2 diabetes subgroups through topological
  analysis of patient similarity.
\newblock {\em Science Translational Medicine}, 7(311):311ra174--311ra174,
  2015.
\newblock \href {http://dx.doi.org/10.1126/scitranslmed.aaa9364}
  {\path{doi:10.1126/scitranslmed.aaa9364}}.

\bibitem{Lumetal2013}
Pek~Y. Lum, Gurjeet Singh, Alan Lehman, Tigran Ishkanov, Mikael.
  Vejdemo-Johansson, Muthi Alagappan, John~G. Carlsson, and Gunnar Carlsson.
\newblock Extracting insights from the shape of complex data using topology.
\newblock {\em Scientific Reports}, 3(1236), 2013.
\newblock \href {http://dx.doi.org/10.1038/srep01236}
  {\path{doi:10.1038/srep01236}}.

\bibitem{martin2013plant}
Cathie Martin.
\newblock The plant science decadal vision.
\newblock {\em The Plant Cell}, 25(12):4773--4774, 2013.

\bibitem{Ma2002}
Yukio Matsumoto.
\newblock {\em An Introduction to Morse Theory}.
\newblock Europe and Central Asia Poverty Reduction and Economic Manag.
  American Mathematical Society, 2002.

\bibitem{Mi1963}
John~W. Milnor.
\newblock {\em Morse Theory}.
\newblock Annals of mathematics studies. Princeton University Press, 1963.

\bibitem{Munkres1984}
James~R. Munkres.
\newblock {\em Elements of Algebraic Topology}.
\newblock Addison--Wesley Publishing Company, Menlo Park, 1984.

\bibitem{NiLeCa2011}
Monica Nicolau, Arnold~J. Levine, and Gunnar Carlsson.
\newblock Topology based data analysis identifies a subgroup of breast cancers
  with a unique mutational profile and excellent survival.
\newblock {\em Proceedings of the National Academy of Sciences},
  108(17):7265--7270, 2011.
\newblock \href {http://dx.doi.org/10.1073/pnas.1102826108}
  {\path{doi:10.1073/pnas.1102826108}}.

\bibitem{Nietal2015}
Jessica~L. Nielson, Jesse Paquette, Aiwen~W. Liu, Cristian~F. Guandique, C.~Amy
  Tovar, Tomoo Inoue, Karen-Amanda Irvine, John~C. Gensel, Jennifer Kloke,
  Tanya~C. Petrossian, Pek~Y. Lum, Gunnar~E. Carlsson, Geoffrey~T. Manley, Wise
  Young, Michael~S. Beattie, Jacqueline~C. Bresnahan, and Adam~R. Ferguson.
\newblock {Topological data analysis for discovery in preclinical spinal cord
  injury and traumatic brain injury}.
\newblock {\em Nature Communications}, 6:8581+, October 2015.
\newblock \href {http://dx.doi.org/10.1038/ncomms9581}
  {\path{doi:10.1038/ncomms9581}}.

\bibitem{Ruetal2015}
Matteo Rucco, Emanuela Merelli, Damir Herman, Devi Ramanan, Tanya Petrossian,
  Lorenzo Falsetti, Cinzia Nitti, and Aldo Salvi.
\newblock Using topological data analysis for diagnosis pulmonary embolism.
\newblock {\em Journal of Theoretical and Applied Computer Science}, 9:41--55,
  2015.
\newblock
  \href{http://www.jtacs.org/archive/2015/1/5}{http://www.jtacs.org/archive/2015/1/5}.

\bibitem{Saetal2014}
Ghanashyam Sarikonda, Jeremy Pettus, Sonal Phatak, Sowbarnika Sachithanantham,
  Jacqueline~F. Miller, Johnna~D. Wesley, Eithon Cadag, Ji~Chae, Lakshmi
  Ganesan, Ronna Mallios, Steve Edelman, Bjoern Peters, and Matthias von
  Herrath.
\newblock {CD8 T-cell reactivity to islet antigens is unique to type 1 while
  CD4 T-cell reactivity exists in both type 1 and type 2 diabetes}.
\newblock {\em Journal of Autoimmunity}, 50(Supplement C):77--82, 2014.
\newblock \href {http://dx.doi.org/10.1016/j.jaut.2013.12.003}
  {\path{doi:10.1016/j.jaut.2013.12.003}}.

\bibitem{SiMeCa2007}
Gurjeet Singh, Facundo Memoli, and Gunnar Carlsson.
\newblock {Topological Methods for the Analysis of High Dimensional Data Sets
  and 3D Object Recognition}.
\newblock In M.~Botsch, R.~Pajarola, B.~Chen, and M.~Zwicker, editors, {\em
  Proceedings of the Symposium on Point Based Graphics}, pages 91--100, Prague,
  Czech Republic, 2007. Eurographics Association.
\newblock \href {http://dx.doi.org/10.2312/SPBG/SPBG07/091-100}
  {\path{doi:10.2312/SPBG/SPBG07/091-100}}.

\bibitem{teller2013data}
Swizec Teller.
\newblock {\em Data Visualization with d3.js}.
\newblock Packt Publishing Ltd, 2013.

\bibitem{ToOlThRaCuSc2016}
Brenda~Y. Torres, Jose Henrique~M. Oliveira, Ann Thomas~Tate, Poonam Rath,
  Katherine Cumnock, and David~S. Schneider.
\newblock Tracking resilience to infections by mapping disease space.
\newblock {\em PLoS Biol}, 14(4):1--19, 04 2016.
\newblock \href {http://dx.doi.org/10.1371/journal.pbio.1002436}
  {\path{doi:10.1371/journal.pbio.1002436}}.

\bibitem{TsZa2004}
George Tsaggouris and Christos Zaroliagis.
\newblock {\em Non-additive Shortest Paths}, pages 822--834.
\newblock Springer Berlin Heidelberg, Berlin, Heidelberg, 2004.
\newblock \href {http://dx.doi.org/10.1007/978-3-540-30140-0_72}
  {\path{doi:10.1007/978-3-540-30140-0_72}}.

\bibitem{Tu2011}
Loring~W. Tu.
\newblock {\em An Introduction to Manifolds}.
\newblock Universitext. Springer New York, 2nd edition, 2011.

\bibitem{TuGhMiVa1995}
Hoang Tuy, Saied Ghannadan, Athanasios Migdalas, and Peter V{\"a}rbrand.
\newblock The minimum concave cost network flow problem with fixed numbers of
  sources and nonlinear arc costs.
\newblock {\em Journal of Global Optimization}, 6(2):135--151, Mar 1995.
\newblock \href {http://dx.doi.org/10.1007/BF01096764}
  {\path{doi:10.1007/BF01096764}}.

\end{thebibliography}


\end{document}